\newtheorem{pp}{}
\let\doendproof\endproof
\renewcommand\endproof{~\hfill\qed\doendproof}
\newcommand{\charge}{\gamma}
\title{Threshold-Coloring and Unit-Cube Contact Representation of Graphs}
\author{
Md.~Jawaherul~Alam\inst{1}
\and
Steven~Chaplick \inst{2}
\and
Ga\v{s}per~Fijav{\v z}\inst{3}
\and
Michael~Kaufmann\inst{4}
\and
Stephen~G.~Kobourov\inst{1}
\and
Sergey~Pupyrev\inst{1}
}
\institute{
    Department of Computer Science, University of Arizona,
    Tucson, AZ, USA%\\{\tt mjalam@email.arizona.edu, kobourov@cs.arizona.edu}
\and
    Department of Applied Mathematics, Charles University, Prague, Czech Republic
\and
	Faculty of Computer and Information Science,
	University of Ljubljana, Slovenia
\and
    Wilhelm-Schickhard-Institut f\"ur Informatik, Universit\"at T\"ubingen,
    T\"ubingen, Germany%\\{\tt mk@informatik.uni-tuebingen.de}
}
\begin{document}

\maketitle

\begin{abstract}

In this paper we study {\em threshold coloring} of graphs, where the vertex colors represented by integers are used to describe any spanning subgraph of the given graph as follows. Pairs of vertices with near colors imply the edge
between them is present and pairs of vertices with far colors imply the edge is absent.
Not all planar graphs are threshold-colorable, but several subclasses, such as trees, some planar grids, and planar graphs
 without short cycles can always be threshold-colored.
Using these results we obtain unit-cube contact representation of several subclasses of planar graphs.
%Specifically, we obtain unit-cube contact representations for
% several classes of planar graphs.
Variants of the threshold coloring problem are related to well-known graph coloring and other graph-theoretic problems. Using these relations we show the NP-completeness for two of these variants, and describe a polynomial-time algorithm for another.
\end{abstract}

\section{Introduction}

Graph coloring is among the fundamental problems in graph theory.
Typical applications of the problem and its generalizations are in job scheduling,
 %task assignment,
 channel assignments in wireless networks, register allocation in compiler
optimization and many others~\cite{Roberts91}.
In this paper
we consider a new graph coloring problem in which we assign colors (integers) to the vertices of a graph $G$ in order to define a spanning subgraph $H$ of $G$.
In particular, we color the vertices of $G$ so that for each edge of $H$, the
two endpoints are near, i.e., their distance is within a given ``threshold'', and for each edge
of $G\setminus H$, the endpoints are far, i.e., their distance greater than the threshold;
see Fig~\ref{fig:threshold-cube}.

The motivation of the problem is twofold. First, such coloring can be used for the Frequency Assignment Problem~\cite{hale80},
which asks for assigning frequencies to transmitters in radio
networks so that only specified pairs of transmitters can communicate with each other.
Second, such coloring can be used in the context of the geometric problem of
\textit{unit-cube contact representation} of planar graphs.
In such a representation of a graph, each vertex
is represented by a unit-size cube and each edge is realized
by a common boundary with non-zero area between the two corresponding cubes. Finding classes
of planar graphs with unit-cube contact representation was recently posed as an open question
by Bremner~{\em et~al.}~\cite{Bremner12}.
In this paper we partially address this problem as
an application of our coloring problem in the following way.
Suppose a planar graph $G$
has a unit-cube contact representation where one face
of each cube is co-planar; see Fig.~\ref{fig:threshold-cube}(a). Assume that we can
define a spanning subgraph $H$ of $G$ by our particular vertex coloring. We show that
it is possible to compute a unit-cube contact representation of $H$ by lifting the cube for each vertex
$v$ by the amount equal to the color of $v$ (where the size or side-length of the cubes are
roughly equal to the threshold); see Fig.~\ref{fig:threshold-cube}(b).

\begin{figure}[t]
\centering
\includegraphics[width=\textwidth]{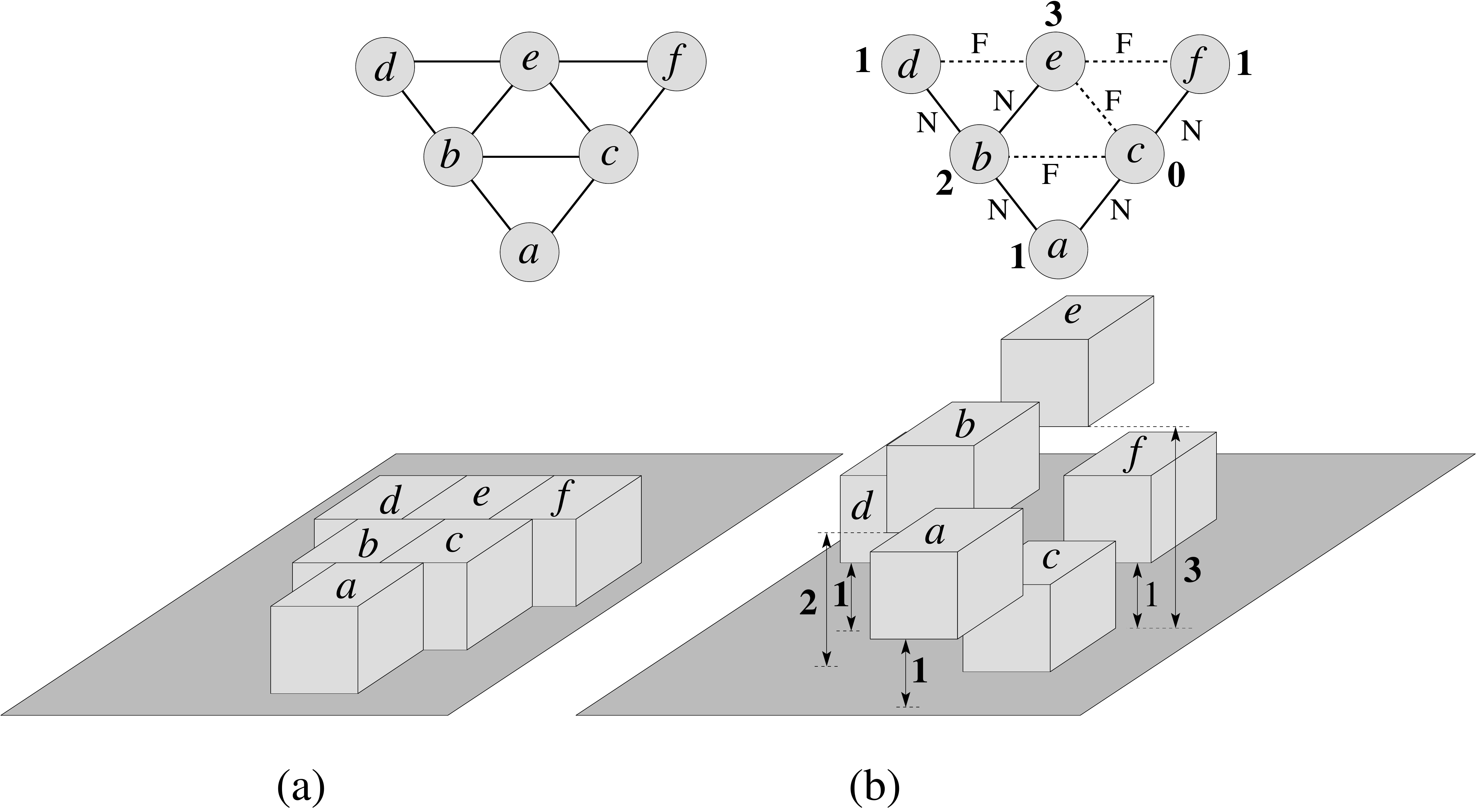}
\caption{(a) A planar graph $G$ and its unit-cube contact representation where the bottom faces of all cubes are
co-planar, (b) a spanning subgraph $H$ of $G$ with
a $(4,1)$-threshold-coloring and its unit-cube contact representation.
 Far edges are shown dashed, near edges are shown solid.}
\label{fig:threshold-cube}
\end{figure}

\subsection{Problem Definition}

An \emph{edge-labeling} of graph $G=(V,E)$ is a mapping $\ell: E \rightarrow \{N,F\}$
assigning labels $N$ or $F$ to each edge of the graph; we informally name edges labeled with $N$ as the
\emph{near} edges, and edges labeled with $F$ as the
\emph{far} edges. Note that such an edge labeling of $G$ defines a partition of the edges
$E$ into near and far edges. By abuse of notation the pair $\{N,F\}$  also denotes this partition.

Let $r\ge 1$ and $t\ge 0$ be two integers and let $[1\dots r]$ denote a set of $r$ consecutive integers.
For a graph $G=(V,E)$ and an edge-labeling $\ell: E \rightarrow \{N,F\}$ of $G$, a
\textit{$(r,t)$-threshold-coloring} of $G$ with respect to $\ell$ is a \textit{coloring}
$c:V\rightarrow [1\dots r]$ such that for each edge
$e=(u,v)\in E$, $e\in N$ if and only if $|c(u)-c(v)|\le t$.
We call $r$ and $t$ the \emph{range} and the \emph{threshold}.
Note that the set of near edges defines a spanning subgraph $H=(V,N)$ of $G$, where $H$ is a \textit{spanning subgraph} of graph $G$ if it contains all vertices of $G$.
 $H$ is a \textit{threshold subgraph} of $G$ if there exists such a threshold-coloring.

A graph $G$ is \emph{total-threshold-colorable} if for every edge-labeling $\ell$ of $G$
there exists an $(r,t)$-threshold-coloring of $G$ with respect to $\ell$ for some $r\ge 1, t\ge 0$.
Informally speaking, for every partition of edges of $G$ into near and far edges, we can
produce vertex colors so that endpoints of near edges receive near colors, and
endpoints of far edges receive colors that are far apart.
A graph $G$ is \emph{$(r,t)$-total-threshold-colorable} if it is total-threshold-colorable
for the range $r$ and threshold $t$. In this paper we focus on the following problem variants:

\def\Q#1{\ifnum#1=1
            \textbf{Total-Threshold-Coloring Problem}\else
           \ifnum#1=2
             \textbf{Threshold-Coloring Problem}\else
             \ifnum#1=3
               \textbf{Fixed-Threshold-Coloring Problem}\else
               \ifnum#1=4
                 \textbf{Exact-Threshold-Coloring Problem}\fi\fi\fi\fi}

\begin{problem}
(\Q1) Given a graph $G$, is $G$ total-threshold-colorable,
that is, is every spanning subgraph of $G$ a threshold subgraph of $G$?
\end{problem}

The problem is closely related to the question about whether a particular spanning graph $H$ of $G$ is threshold-colorable.

\begin{problem}
(\Q2) Given a graph $G$ and a spanning subgraph $H$,
is $H$ a threshold subgraph of $G$ for some integers $r\ge 1, t\ge 0$?
\end{problem}

Another interesting variant of the threshold-coloring is the one in which we specify that the graph $G$
is the complete graph. In this case we call $H$ an \textit{exact-threshold} graph if $H$ is a threshold
subgraph of the complete graph $G$ for some integers $r \ge 1, t\ge 0$.

\begin{problem}
(\Q4) Given a graph $H$, is $H$ an exact-threshold graph?
\end{problem}

In the final variant of the problem we assume that the threshold and the
range are the part of the input:

\begin{problem}
(\Q3) Given a graph $G$, a spanning subgraph $H$, and
integers $r\ge 1, t \ge 0$, is $H$ $(r,t)$-threshold-colorable?
\end{problem}

\subsection{Related Work}

Many problems in graph theory
deal with coloring or assigning labels to the vertices of a graph; many graph
classes are defined based on such coloring and labeling; see~\cite{Brandstadt99} for an excellent survey.
To the best of our knowledge, total-threshold-colorability defines a new
class of graphs. Here we mention two closely related classes: threshold graphs and difference graphs.
\emph{Threshold graphs} are
ones for which there is a real number $S$ and for every vertex $v$ there is a real weight
$a_v$ such that $(v,w)$ is an edge if and only if $a_v + a_w \ge S$~\cite{Mahadev95}.
A graph is a \textit{difference graph} if there is a real number
$S$ and for every vertex $v$ there is a real weight $a_v$ such that $|a_v| < S$ and
$(v,w)$ is an edge if and only if $|a_v - a_w| \ge S$~\cite{Hammer90}. Note that for both
classes the threshold (real number $S$) defines edges between all pairs of
vertices, while in our setting the threshold defines only the edges of a
graph $G$, which is not necessarily a complete graph.
Both threshold and difference graphs
can be characterized in terms of forbidden induced subgraphs. For
our problem such a characterization is unknown. For details on threshold and difference graphs,
see~\cite{Mahadev95}.

Another related graph coloring problem is the \emph{distance constrained graph
labeling}. Here the goal is to find $L(p_1,\dots, p_k)$-labeling of the vertices of a graph
so that for every pair of vertices at distance at most $i\le k$ we have that the difference of their
labels is at least $p_i$. The most studied variant is $L(2,1)$-labeling~\cite{Griggs92,Fiala05}.
In~\cite{Griggs92} it was shown that minimizing the number of labels in $L(2,1)$-labeling is
NP-complete, even for graphs with diameter 2. Further, it shown that it is also NP-complete
to determine whether a labeling exists with at most $k$ labels for every fixed integer $k\ge 4$~\cite{Fiala01}.

A threshold-coloring of a planar graph can be used to find a contact representation of the graph with cuboids
(axis aligned boxes) in 3D.
Thomassen~\cite{Thomassen86} shows that any planar
graph has a proper contact representation by cuboids in 3D. In a \textit{contact representation}
 of a graph, the vertices are represented by cuboids (or other polygonal shapes) and the edges
 are realized by a common boundary of the two corresponding cuboids. A contact representation
 is \textit{proper} if for each edge the corresponding common boundary has non-zero area.
 Felsner and Francis~\cite{Felsner11} prove that any planar graph has a
(non-proper) contact representation by cubes.

Bremner~{\em et~al.}~\cite{Bremner12} proves that the same result does not hold when using only unit cubes.
Our results on threshold-coloring of planar graphs translates to results on classes of planar graphs that can be represented by contact of unit cubes.

\subsection{Our Contribution}

\begin{enumerate}

\item
We study the \Q1 for various subclasses of planar graphs. In particular, we show that
 several subclasses of planar graphs are threshold-colorable (e.g., trees, hexagonal grids,
 planar graphs without any cycles of length $\le 9$) and several subclasses are not
 (e.g., triangular grid, 4-3 grid). Our results are summarized in Table~\ref{tbl:results}.

\def\tableimage#1{\includegraphics[width=1cm]{#1}}

\begin{table}[t]
    \centering
    \begin{tabular}{|c|c|c|c|c|c|c|c|c|c|c|l|}
        \hline
	\parbox[c]{1cm}{\vspace{0.8cm}graph classes} &
%      \parbox[c]{0.7cm}{Path} &
        \parbox[c]{0.7cm}{Cycle} &
        \parbox[c]{0.7cm}{Tree} &
        \parbox[c]{0.7cm}{Fan} &
        \parbox[c]{1.3cm}{Triangular Grid} &
        \parbox[c]{1.1cm}{Square Grid} &
        \parbox[c]{1.3cm}{Hexagonal Grid} &
        \parbox[c]{1.3cm}{Octagonal-Square Grid} &
        \parbox[c]{1.2cm}{Square-Triangle Grid} &
        \parbox[c]{1.7cm}{Planar Graph w/o Cycles of size $\le 9$} \\

        \cline{2-10}
 %       \hline

	&
%        \tableimage{figures/} &
        \tableimage{cycle_table} &
        \tableimage{tree_table} &
        \tableimage{fan_table} &
        \tableimage{triangle_table} &
        \tableimage{square_table} &
        \tableimage{hexgrid_table} &
        \tableimage{84_table} &
        \tableimage{43_table} &
        \tableimage{planar_table} \\
        \hline
	\hline

	\parbox[c]{1.2cm}{threshold coloring} &
%        \scriptsize{$r=2, t=0$} &
	\parbox[c]{0.9cm}{$r=5$, $t=1$} &
%        \scriptsize{$r=3, t=0$} &
	\parbox[c]{0.9cm}{$r=2$, $t=0$} &
%        \scriptsize{$r=2, t=0$} &
	\parbox[c]{0.9cm}{$r=5$, $t=1$} &
%        \scriptsize{$r=5, t=1$} &
        No &
        Open &
	\parbox[c]{0.9cm}{$r=5$, $t=1$} &
%        \scriptsize{$r=5, t=1$} &
	\parbox[c]{0.9cm}{$r=5$, $t=1$} &
%        \scriptsize{$r=5, t=1$} &
        No &
	\parbox[c]{0.9cm}{$r=8$, $t=2$} \\
%        \scriptsize{$r=8, t=2$} \\
        \hline

	\parbox[c]{1.2cm}{unit-cube contact} &
%        Open &
        Yes &
        No &
        No &
        Open &
        Yes &
        Yes &
        Yes &
        Open &
        No \\
        \hline
    \end{tabular}
\vspace{.2cm}
    \caption{Results on the \Q1. ``No'' entries in the last row follow from the fact that graphs with vertices of high degrees cannot have unit-cube representation~\cite{Bremner12}.}
    \label{tbl:results}
\end{table}

\item
As an application of the threshold-coloring problem, we address the problem of contact
 representation of planar graphs with unit cubes.
 Given a planar graph, we investigate whether each of its subgraphs has a contact
 representation with unit cubes. We show how we can use the threshold-coloring
 for computing unit-cube contact representations for some subclasses of planar graphs.
 For some other subclasses, we gave algorithms to directly compute unit-cube contact
 representation without using threshold-coloring. Thus we answer some of the open problems
 from~\cite{Bremner12} for some subclasses of planar graphs.
 The last column of Table~\ref{tbl:results} summarizes these results.

\item
Finally we study the relation of the various threshold-coloring problems with other graph-theoretic problems.
Specifically, we show that the \Q2 and the \Q3 are NP-complete
 by reductions from  a graph sandwich problem and the classical vertex coloring problem, respectively. We also
 show that the \Q4 can be solved in linear time since it is equivalent to the
 proper interval graph recognition problem.

 \end{enumerate}

\section{Threshold-Coloring and Other Graph Problems}
\label{sect:relation}

We begin by showing the connections between threshold-colorability and some classical graph-theoretical and graph
coloring problems.

\subsection{Vertex Coloring Problem}

Let $G=(V,E)$ be a graph. We call $G$ $k$-vertex-colorable if there exists a coloring
$c:V\rightarrow [1\dots k]$ such that for any edge $(u,v)\in E$, $c(u)\neq c(v)$, that is, $u$ and $v$
have different colors. Given an input graph $G$ and an integer $k>0$, the vertex coloring
problem asks whether there exists a $k$-vertex-coloring of $G$.

\begin{lemma}
\label{lem:color} Let $G=(V,E)$ be a graph and let $k$ be a positive integer. Define an edge-labeling
 $\ell:E\rightarrow \{N,F\}$ that assigns each edge the label $F$, that is, for each edge $e\in E$,
 $\ell(e)=F$. Then $G$ has a $k$-vertex-coloring if and only if there exists a $(k,0)$-threshold-coloring
 of $G$ with respect to $\ell$.
\end{lemma}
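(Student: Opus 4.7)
The plan is to simply unfold the two definitions and observe that, in the special case $t=0$ and $\ell\equiv F$, they become literally the same condition. So the proof will be a one-shot bidirectional argument with no real combinatorics required.

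First I would translate the threshold-coloring side. By definition, a $(k,0)$-threshold-coloring of $G$ with respect to $\ell$ is a map $c:V\to[1\dots k]$ such that for every edge $e=(u,v)\in E$, we have $e\in N$ iff $|c(u)-c(v)|\le 0$. Since $c$ takes integer values, $|c(u)-c(v)|\le 0$ is the same as $c(u)=c(v)$, so equivalently $e\in F$ iff $c(u)\ne c(v)$. Because $\ell$ labels every edge with $F$, the condition $e\in F$ holds for all edges, and the requirement collapses to: $c(u)\ne c(v)$ for every edge $(u,v)\in E$.

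Next I would translate the vertex-coloring side. By definition, a $k$-vertex-coloring of $G$ is a map $c:V\to[1\dots k]$ such that $c(u)\ne c(v)$ for every edge $(u,v)\in E$, which is exactly the condition obtained above.

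Thus the forward direction (a $k$-vertex-coloring gives a $(k,0)$-threshold-coloring with respect to the all-$F$ labeling) and the reverse direction (a $(k,0)$-threshold-coloring with respect to the all-$F$ labeling gives a $k$-vertex-coloring) both hold using the very same map $c$. There is no real obstacle here; the only thing to be careful about is noting that the colors are integers, so $|c(u)-c(v)|\le 0$ really is equivalent to equality, which is what makes the two definitions line up perfectly.
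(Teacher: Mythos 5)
Your proof is correct and matches the paper's argument: both unfold the definitions and observe that with $t=0$ and every edge labeled $F$, the threshold condition reduces to $c(u)\neq c(v)$ on every edge, so the same map $c$ witnesses both properties. Nothing further is needed.
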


\begin{proof} Let $c:V\rightarrow [1\dots k]$ define a mapping of the vertices of $G$ to the colors
 $[1\dots k]$. Then $c$ is a $k$-vertex-coloring of $G$ $\Leftrightarrow$ for each edge $e=(u,v)\in E$,
 $c(u)\neq c(v)$ $\Leftrightarrow$ for each edge $e=(u,v)\in E$, $|c(u)-c(v)|>0$
 $\Leftrightarrow$ $c$ is a $(k,0)$-threshold-coloring of $G$ with respect to $\ell$.
\end{proof}

\begin{corollary} The \Q3 is NP-complete.
\end{corollary}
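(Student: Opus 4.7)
The plan is to derive the corollary directly from Lemma~\ref{lem:color} by exhibiting a polynomial-time many-one reduction from the classical Vertex Coloring problem, which is known to be NP-complete, together with a standard argument for membership in NP.

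First I would verify membership in NP. Given an instance $(G,H,r,t)$ of the Fixed-Threshold-Coloring Problem, a certificate is simply a map $c : V \to [1\dots r]$, which has size $O(|V|\log r)$ and hence is polynomial in the size of the input (since $r$ is part of the input). Verification amounts to iterating over each edge $e=(u,v)$ of $G$ and checking that $|c(u)-c(v)|\le t$ when $e\in N$ and $|c(u)-c(v)|> t$ when $e\in F$, which is clearly polynomial.

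For NP-hardness, I would reduce from Vertex Coloring. Given an instance $(G,k)$ of Vertex Coloring, construct the instance $(G, H, r, t)$ of the Fixed-Threshold-Coloring Problem where $H=(V,\emptyset)$ is the spanning subgraph with no edges (so the induced edge-labeling assigns $F$ to every edge of $G$), $r = k$, and $t = 0$. This construction is clearly polynomial. By Lemma~\ref{lem:color}, $G$ admits a $k$-vertex-coloring if and only if $H$ admits an $(r,t)$-threshold-coloring with respect to the all-$F$ labeling of $G$, which establishes correctness of the reduction.

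No step in this argument is really an obstacle: the only thing to check carefully is that the ``spanning subgraph'' formulation used in Problem~\Q3 captures precisely the all-$F$ labeling used in Lemma~\ref{lem:color}, which is immediate since $H=(V,\emptyset)$ has no near edges. Combining NP-membership and the reduction yields the corollary.
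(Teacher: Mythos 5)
Your proof is correct and follows exactly the route the paper intends: Lemma~\ref{lem:color} is stated precisely to set up the reduction from Vertex Coloring via the all-$F$ labeling with $H=(V,\emptyset)$, $r=k$, $t=0$, and your NP-membership check is the standard one. The paper leaves these details implicit, but your argument fills them in faithfully.
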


\subsection{Proper Interval Representation Problem}

An \textit{interval representation}~\cite{Brandstadt99} for a graph $G=(V,E)$ is one where
each vertex $v$ of $G$ is represented by an interval $I(v)$ of $\mathbb{R}$ such that for any edge $(u,v)\in E$,
the intervals $I(u)$ and $I(v)$ have a non-empty intersection, that is, $I(u)\cap I(v)\neq\emptyset$.
A \textit{proper interval representation}~\cite{Brandstadt99} for $G$ is an interval
representation of $G$
where no interval properly contains another. A \textit{proper interval graph} is one that
has a proper interval representation. Equivalently, a \textit{proper interval graph} is
one that has an interval representation with unit intervals. The problem of proper interval
representation for a graph $G$ asks whether $G$ has a proper interval representation.
The problem has been studied extensively~\cite{FMM95,HSS99}, and it still attracts
attention~\cite{Le12}.

\begin{lemma}
\label{lem:exact-proper} A graph is an exact-threshold graph if and only if it is a proper interval
 graph.
\end{lemma}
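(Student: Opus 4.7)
The plan is to prove both directions of the equivalence with explicit constructions that convert one representation into the other.

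For the forward direction, suppose $H$ is an exact-threshold graph witnessed by a $(r,t)$-threshold-coloring $c:V\to[1\dots r]$ of $H$ as a subgraph of the complete graph. I would assign to each vertex $v$ the closed interval $I(v):=[c(v),\,c(v)+t+\tfrac12]$. Since all intervals share the same length $t+\tfrac12$, none properly contains another. Because the colors are integers, $I(u)\cap I(v)\ne\emptyset$ iff $|c(u)-c(v)|\le t+\tfrac12$ iff $|c(u)-c(v)|\le t$, which by the exact-threshold property is equivalent to $(u,v)\in E(H)$. Hence $H$ is a proper interval graph.

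For the reverse direction, suppose $H$ has a unit interval representation $\{[a_v,a_v+1]:v\in V\}$, so that $(u,v)\in E(H)$ iff $|a_u-a_v|\le 1$. The task is to discretize the real coordinates $a_v$ into integer colors so that the threshold plays the role of the unit length. First I would perturb the $a_v$ slightly so that $|a_u-a_v|\ne 1$ for every pair $u\ne v$, which does not alter the adjacency structure. Setting $\epsilon:=\min_{u\ne v}\bigl||a_u-a_v|-1\bigr|>0$, I would then choose an integer $M$ with $M\epsilon>1$ and define $c(v):=\lfloor Ma_v\rfloor$, translated so that the minimum color equals $1$. A short computation using fractional parts yields $\bigl||c(u)-c(v)|-M|a_u-a_v|\bigr|<1$, and consequently
\[
|a_u-a_v|\le 1-\epsilon\ \Longrightarrow\ |c(u)-c(v)|\le M,\qquad |a_u-a_v|\ge 1+\epsilon\ \Longrightarrow\ |c(u)-c(v)|>M.
\]
Taking $r:=\max_v c(v)$ and $t:=M$, this shows that $c$ is an $(r,M)$-threshold-coloring of $H$ with respect to the complete graph, witnessing $H$ as an exact-threshold graph.

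The main obstacle is the rounding argument in the reverse direction: one must certify that the finite collection of pairwise distances $|a_u-a_v|$ stays uniformly bounded away from the critical value $1$ on both sides, which is precisely what the generic perturbation and the positive constant $\epsilon$ guarantee. Once $\epsilon$ is in hand, choosing $M$ large enough absorbs the at-most-one-unit error introduced by the floor function, and the construction goes through cleanly.
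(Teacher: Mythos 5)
Your proof is correct and follows essentially the same route as the paper: both directions translate between vertex colors and left endpoints of equal-length intervals, with the reverse direction discretizing a unit-interval representation by scaling and taking a large threshold (the paper scales so that endpoints become integers; you use a floor with an explicit error bound, which handles irrational endpoints at the cost of the extra rounding analysis). The only point needing a touch more care is the perturbation step: an arbitrary small perturbation of the $a_v$ could destroy an edge realized by two intervals touching at a single point, so you should instead lengthen every interval by a small $\delta$ (preserving all intersections and, for $\delta$ small, creating none) and rescale to unit length, after which all pairwise distances are bounded away from $1$ as you require.
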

\begin{proof} Let graph $H=(V,E)$ be an exact-threshold graph. This implies that
 there are integers $r \ge 1, t\ge 0$ and a mapping $c:V\rightarrow [1\dots r]$ such
 that for any pair $u,v\in V$, $(u,v)\in E \Leftrightarrow |c(u)-c(v)|\le t$
 $\Leftrightarrow |c(u)-c(v)| < t+\epsilon$ with $0<\epsilon<1$ since
 $c(u)$ and $c(v)$ are integers.
 We can find an interval representation of $H$ with unit intervals as follows.
 Choose an arbitrary $\epsilon$ such that $0<\epsilon<1$.
 Define for each vertex $v$ of $H$ an interval $I(v)$ of unit length where the
 left-end has $x$-coordinate $c(v)/(t+\epsilon)$. Then for any two vertices $u$
 and $v$ of $H$, $I(u)$ and $I(v)$ has a non-empty intersection if and only if
 $|\frac{c(u)}{t+\epsilon}-\frac{c(v)}{t+\epsilon}|\le 1$
 $\Leftrightarrow |c(u)-c(v)|\le (t+\epsilon) \Leftrightarrow |c(u)-c(v)| \le t$ since
 $c(u)$ and $c(v)$ are integers. Then $l(u)$ and $l(v)$ has non-empty intersection
 if and only if $(u,v)$ is an edge of $H$.
 Thus these intervals yield an interval representation of $H$.

Conversely, if $H$ has an interval representation $\Gamma$ with unit intervals, we can find an
 exact $(r,t)$-threshold-coloring of $H$ for some integers $r \ge 1, t\ge 0$. Scale
 $\Gamma$ by a sufficiently large factor $t$ such that each end-point of some interval in
 $\Gamma$ has a positive integer $x$-coordinate (after possible translation in the positive
 $x$ direction). Let $r$ be the $x$-coordinate of the right end-point of the rightmost interval
 in this scaled representation. Define a coloring $c:V\rightarrow [1\dots r]$ where for each vertex
 $v$ of $H$, $c(v)$ equals the $x$-coordinate of the left end-point of the interval for $v$.
 Also define the threshold as the scaling factor $t$. It is easy to verify that $c$
 is indeed an $(r,t)$-threshold-coloring.
\end{proof}

Since recognition of proper interval graphs can be done
 in linear time~\cite{HSS99,FMM95,Le12}, we have

\begin{corollary} The \Q4 can be solved in linear time.
\end{corollary}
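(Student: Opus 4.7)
The plan is to derive the corollary essentially for free from Lemma~\ref{lem:exact-proper} combined with the known linear-time recognition algorithms for proper interval graphs. Given an input graph $H$, I would run any of the linear-time proper interval recognition procedures from~\cite{HSS99,FMM95,Le12}. If the algorithm reports that $H$ is a proper interval graph, then by the ``only if'' direction of Lemma~\ref{lem:exact-proper} there exist integers $r\ge 1$, $t\ge 0$ and an $(r,t)$-threshold-coloring of $H$ when $H$ is viewed as a spanning subgraph of the complete graph on $V(H)$, so $H$ is an exact-threshold graph. Conversely, if the recognition algorithm rejects $H$, then by the ``if'' direction of the lemma $H$ cannot be an exact-threshold graph either. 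Hence the decision version of the \Q4 reduces to proper interval recognition in linear time.

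There is essentially no obstacle beyond invoking the equivalence already established in Lemma~\ref{lem:exact-proper}; the only subtlety worth noting is that one should use the lemma as a black-box reduction at the level of \emph{decision}, since translating back and forth between the coloring and the unit-interval representation (as done inside the lemma's proof) is itself constant-time per vertex once the representation is in hand. If one additionally wanted to \emph{output} a witness coloring, the second half of the lemma's proof gives a constructive procedure: extract a unit-interval representation from the recognition algorithm, scale left endpoints to positive integers, and read off $c(v)$ as the scaled left endpoint of $I(v)$ with $t$ the scaling factor; this post-processing is also linear, so the constructive version stays within the same time bound.
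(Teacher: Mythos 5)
Your proposal is correct and follows exactly the paper's route: the corollary is obtained by combining Lemma~\ref{lem:exact-proper} with the known linear-time recognition algorithms for proper interval graphs from~\cite{HSS99,FMM95,Le12}. Your additional remark about extracting a witness coloring in linear time is a harmless (and accurate) elaboration beyond what the paper states.
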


\subsection{Graph Sandwich Problem}

The graph sandwich problem is defined in~\cite{Golumbic95} as follows.

\begin{problem}
\label{prob:sand} Given two graphs $G_1=(V,E_1)$ and $G_2=(V,E_2)$ on the same vertex
 set $V$, where $E_2\subseteq E_1$, and a property $\Pi$, does there exist a graph $H=(V,E)$
 on the same vertex set such that $E_2\subseteq E\subseteq E_1$ and $H$ satisfies
 property $\Pi$?
\end{problem}

Here $E_1$ and $E_2$ can be thought of as \textit{universal} and \textit{mandatory} sets of
edges, with $E$ sandwiched between the two sets. We are interested
in a particular property for the graph sandwich problem: ``proper interval representability''.
A graph satisfies \textit{proper interval representability} if it admits a proper interval
representation.

\begin{lemma}
\label{lem:threshold-sandwich} Let $G=(V,E_G)$ and $H=(V,E_H)$ be two graphs on the same
 vertex set $V$ such that $E_H\subseteq E_G$. Then the threshold-coloring problem for $G$
 with respect to the edge partition $\{E_H, E_G-E_H\}$ is equivalent to the graph sandwich
 problem for the vertex set $V$, mandatory edge set $E_H$, universal edge set
 $E_H\cup (V\times V - E_G)$ and proper interval representability property.
\end{lemma}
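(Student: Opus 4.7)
The plan is to use Lemma \ref{lem:exact-proper} as the bridge: a threshold-coloring of the vertex set $V$ naturally induces an ``exact-threshold graph'' on $V$ (the graph whose edges are precisely the pairs $(u,v)$ with $|c(u)-c(v)|\le t$), and by Lemma \ref{lem:exact-proper} these are exactly the proper interval graphs. So my goal reduces to checking that the constraints of the threshold-coloring of $G$ translate exactly into the sandwich constraints $E_H \subseteq E_S \subseteq E_H \cup (V\times V - E_G)$.

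For the forward direction, I would take a $(r,t)$-threshold-coloring $c$ of $G$ with respect to the partition $\{E_H, E_G\setminus E_H\}$ and define $S=(V,E_S)$ by $E_S=\{(u,v) : |c(u)-c(v)|\le t\}$. Then $S$ is an exact-threshold graph, hence a proper interval graph by Lemma \ref{lem:exact-proper}. The containment $E_H \subseteq E_S$ follows because every edge of $E_H$ is a near edge of $G$, and $E_S \cap (E_G\setminus E_H) = \emptyset$ because far edges have $|c(u)-c(v)|>t$. This last fact together with $E_S \subseteq V\times V$ gives $E_S \subseteq E_H \cup (V\times V - E_G)$, as required.

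For the reverse direction, I would start with a sandwich graph $S=(V,E_S)$ with $E_H \subseteq E_S \subseteq E_H \cup (V\times V - E_G)$ admitting a proper interval representation. By Lemma \ref{lem:exact-proper}, $S$ admits an exact-threshold-coloring $c : V \to [1\dots r]$ with threshold $t$, so $(u,v) \in E_S$ iff $|c(u)-c(v)|\le t$. I then verify that $c$, viewed as a coloring of $G$, is a valid $(r,t)$-threshold-coloring with respect to $\{E_H, E_G\setminus E_H\}$: every edge of $E_H$ lies in $E_S$ and so has $|c(u)-c(v)|\le t$ (near); every edge of $E_G\setminus E_H$ lies in $E_G$ but not in $E_H$, so it cannot be in $E_H \cup (V\times V - E_G) \supseteq E_S$, hence it is not in $E_S$ and so $|c(u)-c(v)|>t$ (far).

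There is no real obstacle here; the proof is essentially bookkeeping, and the only subtlety is keeping track of what the ``universal minus mandatory'' set $(V\times V - E_G)$ encodes: these are the pairs that are non-edges of $G$ and therefore carry no constraint at all in the threshold-coloring problem, which is precisely why they are permitted (but not required) in the sandwich graph $S$. Once this correspondence is spelled out, both directions follow by direct application of Lemma \ref{lem:exact-proper}.
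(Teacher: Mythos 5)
Your proposal is correct and follows essentially the same route as the paper: both directions are handled by passing through the exact-threshold graph $\{(u,v):|c(u)-c(v)|\le t\}$ and invoking Lemma~\ref{lem:exact-proper}, with the same containment bookkeeping for the mandatory and universal edge sets. No substantive difference from the paper's argument.
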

\begin{proof}
Let $E_U$ denote the universal edge set $E_H\cup (V\times V - E_G)$ for the
 graph sandwich problem. Suppose there exists a graph $H^*=(V,E^*)$ such that
 $E_H\subseteq E^*\subseteq E_U$ and $H^*$ has a proper interval representation. Then by
 Lemma~\ref{lem:exact-proper}, there exist two integers $r\ge 0$ and $t\ge 0$ and a
 coloring $c:V\rightarrow [1\dots r]$ such that for any pair $u, v\in V$, $|c(u)-c(v)|\le t$ if
 and only if $(u,v)\in E^*$. We now show that $c$ is in fact a desired
 threshold-coloring for $G$. Consider an edge $e=(u,v)\in E_G$. If $e\in E_H$ then $e\in E^*$
 since $E_H\subseteq E^*$ and hence $|c(u)-c(v)|\le t$. On the other hand if $e\in (E_G-E_H)$,
 $e\notin E_U=E_H\cup (V\times V - E_G)$ and therefore $e\notin E^*$ since
 $E^*\subseteq E_U$. Hence $|c(u)-c(v)|> t$.

Conversely, if there exists integers $r\ge 1$ and $t\ge 0$ such that there is an
 $(r,t)$-threshold-coloring $c:V\rightarrow [1\dots r]$ of $G$ with respect to the edge partition
 $\{E_H,E_G-E_H\}$, then define
 an edge set $E^*$ as follows. For any pair $u,v\in V$, $(u,v)\in E^*$ if and only if
 $|c(u)-c(v)|\le t$. Clearly the graph $H^*=(V,E^*)$ has an exact $(r,t)$-threshold-coloring
 and hence by Lemma~\ref{lem:exact-proper}, $H^*$ has a proper interval representation.
 Furthermore for any edge $e=(u,v)\in E_H$, $|c(u)-c(v)|\le t$ and hence $e\in E^*$. Thus
 $E_H\subseteq E^*$. Again if $e\in E^*$ then $|c(u)-c(v)|le t$. Therefore either $e\in E_H$
 or $e\notin E_G\Rightarrow e\in (V\times V-E_G)$.
 Hence $e\in (E_H\cup (V\times V - E_G))= E_U$. Thus $E^*\subseteq E_U$. Therefore $E^*$
 is sandwiched between the mandatory and the universal set of edges and $H^*$ has a proper
 interval representation.
\end{proof}

Golumbic~et~al.~\cite{GKS94} proved that the graph sandwich problem is
NP-complete for the proper-interval-representability property. Hence
we have

\begin{corollary} The \Q2 is NP-complete.
\end{corollary}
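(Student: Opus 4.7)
The plan is to obtain the corollary essentially as a direct consequence of Lemma~\ref{lem:threshold-sandwich} combined with the NP-completeness of the proper-interval-representability sandwich problem due to Golumbic~et~al.~\cite{GKS94}. I would split the argument into the usual two parts: NP membership, and NP-hardness via a reduction.

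For NP-hardness, I would reduce from the sandwich problem with the proper-interval-representability property. Given an instance with vertex set $V$, mandatory edge set $M$ and universal edge set $U$ (with $M\subseteq U$), I would construct the pair $(G,H)$ by setting $E_H=M$ and $E_G=(V\times V-U)\cup M$. This construction is clearly polynomial, satisfies $E_H\subseteq E_G$ so that $H$ is a spanning subgraph of $G$, and one checks that $E_H\cup(V\times V-E_G)=M\cup(U\setminus M)=U$, so the mandatory and universal edge sets in Lemma~\ref{lem:threshold-sandwich} match $M$ and $U$ exactly. Lemma~\ref{lem:threshold-sandwich} then yields that $(G,H)$ is a yes-instance of the \Q2 if and only if the original sandwich instance admits a proper interval sandwich graph, completing the hardness reduction.

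For NP membership, the natural certificate is the triple $(r,t,c)$ where $c:V\to[1\dots r]$. Verification is easy: for each edge $e=(u,v)\in E_G$, check that $|c(u)-c(v)|\le t$ iff $e\in E_H$, which takes $O(|E_G|)$ time once the labels are read. The only non-trivial point is bounding the bit-size of the certificate, i.e.\ arguing that whenever a threshold-coloring exists one exists with $r$ and $t$ polynomial in $|V|$. I would obtain this by composing Lemma~\ref{lem:threshold-sandwich} with (the proof of) Lemma~\ref{lem:exact-proper}: a yes-instance produces a sandwich graph $H^*$ that is a proper interval graph, and every proper interval graph on $n$ vertices admits a unit interval representation whose interval endpoints, after appropriate scaling and translation, are integers of magnitude $O(n)$; this gives $r,t\in O(n)$.

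The only place that requires genuine care is this polynomial-bound step for membership in NP, since the statements of Lemma~\ref{lem:exact-proper} and Lemma~\ref{lem:threshold-sandwich} assert existence of some integers $r,t$ without any quantitative control. Everything else is bookkeeping, so I would write the proof as a short paragraph stating the reduction and then a brief sentence pointing out that the equivalence with proper interval graphs bounds the certificate size polynomially.
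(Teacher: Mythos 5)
Your proposal is correct and follows essentially the same route as the paper, which derives the corollary directly from Lemma~\ref{lem:threshold-sandwich} together with the NP-completeness of the proper-interval sandwich problem of Golumbic~et~al. The extra care you take (making the reduction direction explicit via $E_G=(V\times V-U)\cup M$, and bounding the certificate size for NP membership) fills in details the paper leaves implicit, but does not change the argument.
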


Summarizing the results in this section, we have the following theorem.

\begin{theorem}
\label{th:labeling} The \Q2 and the \Q3 are NP-complete while the \Q4 can be solved in
 linear time.
\end{theorem}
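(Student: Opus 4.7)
The plan is to assemble the theorem from the three corollaries already established in this section, since each of the three claims has effectively been reduced to a known result via an equivalence lemma. Rather than giving three separate proofs, I would package them as a single consolidated argument, being careful only about NP membership, which is the one technical point the preceding lemmas do not fully address.

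For the Fixed-Threshold-Coloring Problem (Q3), I would invoke Lemma~\ref{lem:color}: the all-$F$ labeling makes $(k,0)$-threshold-colorability coincide with $k$-vertex-colorability, so the standard NP-completeness of vertex coloring transfers directly. NP membership is immediate since $r$ and $t$ are part of the input and a coloring $c:V\to[1\dots r]$ serves as a polynomial-size certificate that can be checked edge-by-edge. For the Exact-Threshold-Coloring Problem (Q4), I would invoke Lemma~\ref{lem:exact-proper}, which identifies exact-threshold graphs with proper interval graphs, and then cite the linear-time proper interval recognition algorithms~\cite{HSS99,FMM95,Le12} to conclude.

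For the Threshold-Coloring Problem (Q2), hardness follows immediately from Lemma~\ref{lem:threshold-sandwich} together with the NP-completeness of the proper-interval-representability sandwich problem of Golumbic et~al.~\cite{GKS94}. The slightly subtle part is NP membership: a priori, a witness coloring $c:V\to[1\dots r]$ might require $r$ exponentially large. To handle this, I would observe that by Lemma~\ref{lem:threshold-sandwich} any yes-instance admits a sandwiched graph $H^\ast$ with a proper interval representation, and any proper interval graph on $n$ vertices has a unit-interval representation whose endpoints lie on a rational grid of polynomial size (obtained, e.g., from a canonical representation produced by the linear-time recognition algorithm). Converting this representation back through Lemma~\ref{lem:exact-proper} yields a coloring with $r$ polynomial in $|V|$, giving the required polynomial-size certificate.

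The main obstacle is this last NP-membership argument, since the natural definition places no explicit bound on $r$ or $t$. Once the polynomial bound on $r$ is in hand, the rest is bookkeeping: the three corollaries already contain all the substantive content, and the theorem is simply their conjunction.
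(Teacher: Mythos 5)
Your proposal is correct and follows essentially the same route as the paper, which obtains the theorem by simply combining the three corollaries of Lemmas~\ref{lem:color}, \ref{lem:exact-proper}, and \ref{lem:threshold-sandwich} with the cited results on vertex coloring, proper interval recognition, and the proper-interval sandwich problem. Your additional care about NP membership for Q2 (bounding the range $r$ polynomially via a grid-aligned unit-interval representation) addresses a point the paper leaves implicit, but it does not change the underlying argument.
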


\section{Total-Threshold-Coloring of Graphs}
\label{sect:threshold}

In this section we address the \Q1: is a given graph $G$ total-threshold-colorable, that is, can every spanning subgraph
of $G$ be represented by appropriately coloring the vertices of $G$?

First note that not every graph (not even every planar graph) is total-threshold-colorable. Suppose that $G=K_4$, and
we would like to represent a subgraph where four of the edges remain and span a $4$-cycle, while the other two edges
are removed (edge-partitioning $\{N,F\}$). Assume that there exists an $(r,t)$-threshold-coloring with colors $c_1, c_2, c_3, c_4$ for vertices $v_1,v_2,v_3,v_4$ respectively.
Without loss of generality assume $c_4$ is the highest color and $(v_1,v_4) \in F$, hence
 also $(v_2,v_3) \in F$. Also assume $c_3\ge c_2$ and consequently
$c_4-c_2 \ge c_3 - c_2$. The left side of the inequality should be at most $t$,
and the right side strictly greater than $t$, which cannot be accomplished by any choice of the range and the threshold.

Next we investigate several subclasses of planar graphs. For each of them we either give an algorithm
to find an $(r,t)$-threshold-coloring of any graph in that class with respect to each edge-partition
for some $r \ge 1, t\ge 0$; or we give an example of a graph in that
class and the edge-partition for which there is no threshold-coloring.

\subsection{Paths, Cycles, Trees, Fans}
For \emph{paths} and \emph{trees} there is a trivial coloring with threshold $t=0$ and two colors.
Choose an arbitrary vertex as the root and color it $0$. Color $1$ all vertices with an odd number of far edges on the
shortest path to the root. Color $0$ all vertices with an even number of far edges to the root. Then all vertices connected
by a near edge of $G$ get the same color, and vertices connected by a far edge
get different colors; see Fig.~\ref{fig:elem}(a).

\begin{figure}[t]
    \center
    \includegraphics[height=2.5cm]{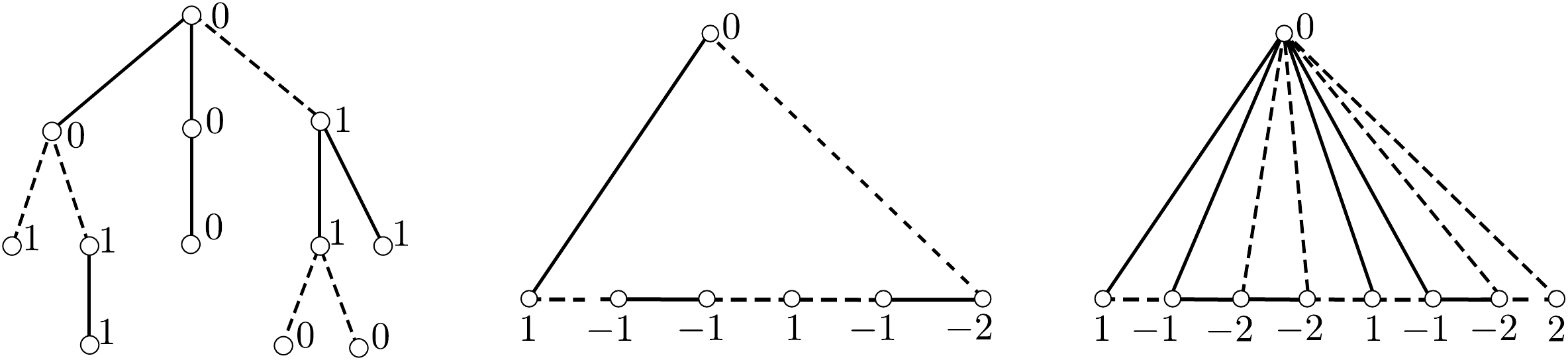}
    \caption{Threshold-coloring of trees, cycles, and fans.}
    \label{fig:elem}
\end{figure}

%For \emph{cycles} there is a similar coloring scheme with threshold $t=0$ and three colors. All vertices of a connected
%component composed of near edges of the cycle get color $2$ and the remaining path
%is colored with $0$ and $1$ as described above; see Fig.~\ref{fig:elem}(b).

For \emph{cycles} and \emph{fans} there is a coloring scheme with threshold $t=1$ and five colors.
A \emph{fan} is obtained from a path $P$ by adding a new vertex $v$ connected to all vertices of the path.
We use colors $\{-2, -1, 0, 1, 2\}$ to color a fan.
The vertices of $P$ are colored by $-1$ and $1$, and $v$ is colored by $0$. After this initial coloring
some of the far edges $(u, v), u\in P$ might have $|c(u) - c(v)| = 1$. We fix it by changing the color
of $u$ from $1$ to $2$ or from $-1$ to $-2$; see Fig.~\ref{fig:elem}(c).
It is easy to see that the same algorithm can be applied to color a cycle; see Fig.~\ref{fig:elem}(b).

\subsection{Triangular Grid}
In a triangular grid (planar weak dual of a hexagonal grid) all faces are triangles and internal vertices
have degree 6. It is easy to show that a triangular grid is not total-threshold-colorable. Consider the graph
with vertices $v_0,v_1,v_2,u_0,u_1,u_2$, where each vertex
$u_i$ is adjacent to $v_{i+1}$ and $v_{i+2}$ (mod 3); see Fig.~\ref{fig:trian}.
Let $F = \{(v_0, v_1)$, $(v_1, v_2)$, $(v_2, v_0)\}$, and let $N$ contain the
 remaining $6$ edges. Assume that
there exists a $(r,t)$-threshold-coloring $c$. Without loss of generality, let
$c(v_0) < c(v_1) < c(v_2)$. Now on one hand $c(v_2)-c(v_0) > 2t$ and on the other
$c(v_2)-c(v_0) \le |c(v_2)-c(u_1)|+|c(u_1)-c(v_0)| \le 2t$, which is impossible.
This also proves that outerplanar graphs are not total-threshold-colorable in general.

\begin{figure}[t]
    \center
	\subfigure[]{
    \includegraphics[width=2.5cm]{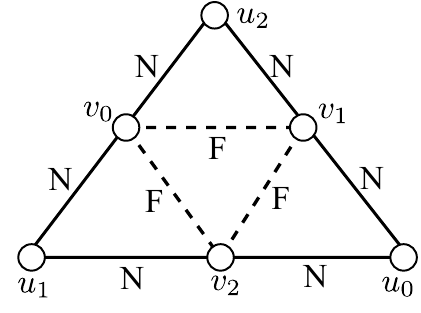}
    \label{fig:trian}}
~~~~~~~~~~
    \subfigure[]{
    \includegraphics[height=2.5cm]{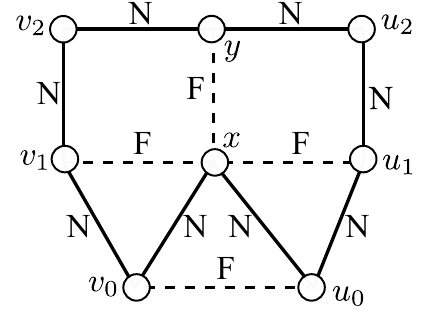}
    \label{fig:43}}
    \caption{Graphs which are not total-threshold-colorable.}
\end{figure}

\subsection{Hexagonal Grid}

In a hexagonal grid all faces are 6-sided and internal vertices have degree 3 (planar weak dual of the triangular grid).
Here we show that the hexagonal grid is total-threshold-colorable with $r=5$ and $t=1$.
We begin with a simple lemma about (5,1) color space.
For convenience, we use colors $\{-2, -1, 0, 1, 2\}$.

\begin{lemma}
\label{pp:p151}
Let $P_2=\{v_0,v_1,v_2\}$ be a path of length 2. Then for any edge-labeling of $P_2$ and a fixed color $k \in \{-2, -1, 1, 2\}$, there is a
threshold-coloring $c$ of $P_2$ with threshold $t=1$, where $c(v_0) = 0$, $c(v_2)=k$ and $c(v_1 ) \in \{-2, -1, 1, 2\}$.
\end{lemma}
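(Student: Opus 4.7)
The plan is to handle the claim by a finite, almost mechanical case analysis on the edge-labeling of the path $P_2$. There are only four possible labelings of the two edges (NN, NF, FN, FF) and four possible values of the target color $k \in \{-2,-1,1,2\}$, so there are at most sixteen situations to check, and in each we must exhibit some color $c(v_1) \in \{-2,-1,1,2\}$ satisfying two inequalities. Before starting the enumeration, I would first exploit the obvious sign-symmetry: negating the coloring $c$ preserves thresholds and all constraints, so after a single sign flip I may assume $k \in \{1,2\}$. This cuts the work in half.

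The key preliminary observation is that the label of the edge $v_0v_1$ together with the constraint $c(v_1)\ne 0$ already determines which two candidates remain for $c(v_1)$: if $v_0v_1$ is near then $|c(v_1)|\le 1$ and nonzero, so $c(v_1)\in\{-1,+1\}$; if $v_0v_1$ is far then $|c(v_1)|>1$, so $c(v_1)\in\{-2,+2\}$. Thus in every case $v_1$ has exactly two candidate colors, and the remaining question is whether at least one of them satisfies the inequality imposed by the edge $v_1v_2$ with respect to the prescribed value $k$. I would organize the case check as a small $4\times 2$ table with rows indexed by the labeling (NN, NF, FN, FF) and columns by $k\in\{1,2\}$, recording the winning choice of $c(v_1)$ in each cell.

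The case analysis itself is straightforward: in the NN cases a positive candidate works (e.g.\ $c(v_1)=1$), in FF cases the opposite sign works (e.g.\ $c(v_1)=-2$), and the mixed cases are resolved by picking the candidate on the side of $0$ opposite to $k$ (for NF) or the candidate on the same side as $k$ with magnitude $2$ (for FN). Each of the eight entries is a one-line inequality verification. I expect no real obstacle; the only mild subtlety is that in the NN case with $k=2$, the obvious choice $c(v_1)=0$ is forbidden, so one must specifically verify that $c(v_1)=1$ still satisfies both $|1-0|\le 1$ and $|1-2|\le 1$. Once the table is filled in, the lemma follows, and the same table will later be the atomic building block for propagating the threshold-coloring around faces of the hexagonal grid.
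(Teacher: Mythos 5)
Your proposal is correct and follows essentially the same route as the paper: the label of $(v_0,v_1)$ fixes the magnitude of $c(v_1)$ (to $1$ if near, $2$ if far), and the sign of $c(v_1)$ is then chosen to satisfy the label of $(v_1,v_2)$, which always succeeds because for $m\in\{1,2\}$ and $k\in\{-2,-1,1,2\}$ exactly one of $\pm m$ is within distance $1$ of $k$. Your explicit $4\times 2$ table and the $k\mapsto -k$ symmetry reduction are just a more verbose packaging of that same observation.
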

\begin{proof}
Depending on whether the edge $(v_0, v_1)$ is near or far, choose $c(v_1)$ to be $1$ or $2$. If the label of $(v_1, v_2)$
disagrees with the colors of $v_1$ and $v_2$ then we change the sign of $c(v_1)$.
\end{proof}

\begin{figure}[b]
    \center
	\subfigure[]{
	\includegraphics[height=2.4cm]{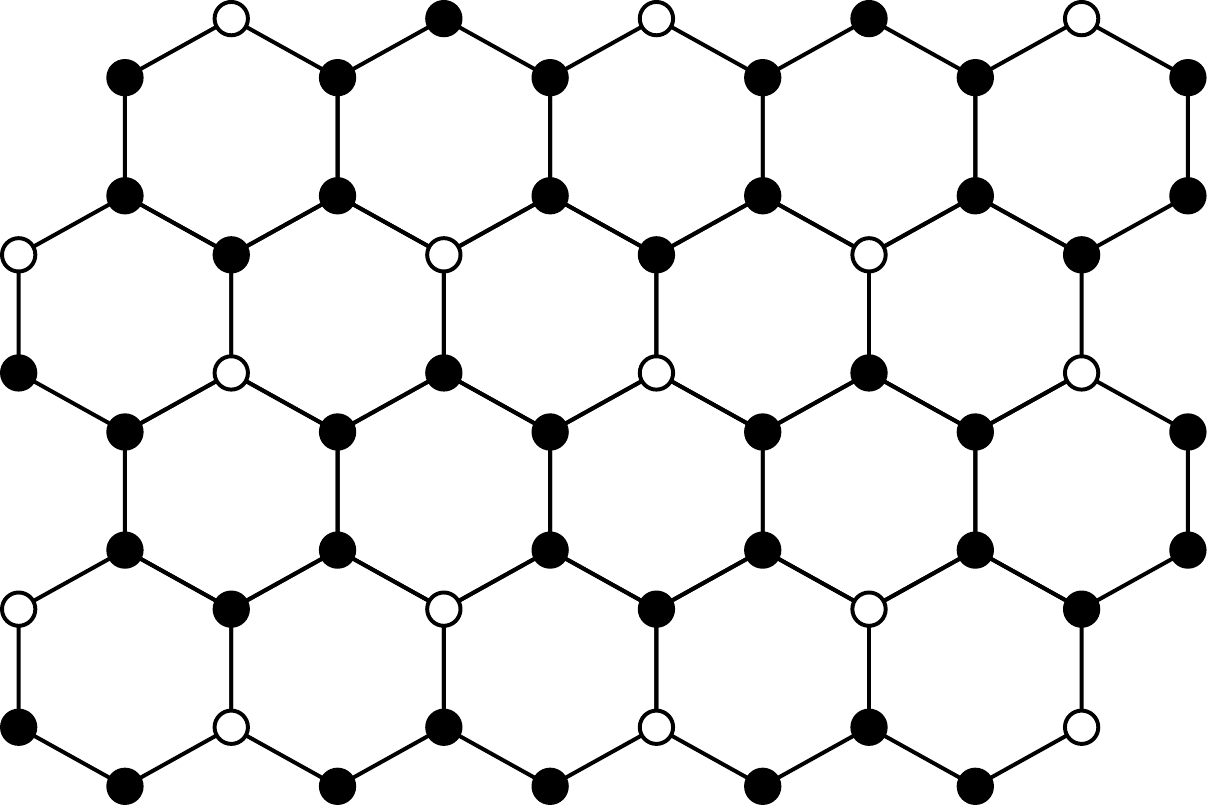}
    \label{fig:hex51a}}
~
    \subfigure[]{
    \includegraphics[height=2.4cm]{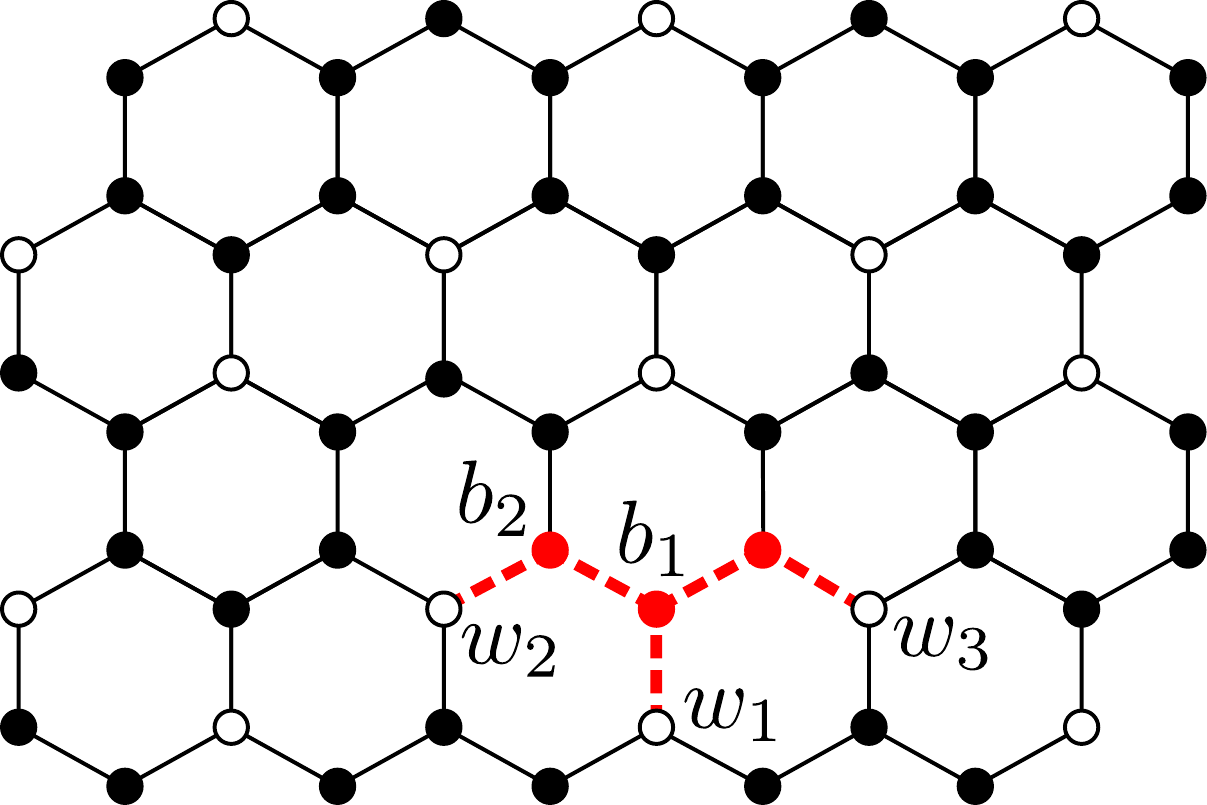}
    \label{fig:hex51b}}
~
    \subfigure[]{
    \includegraphics[height=2.4cm]{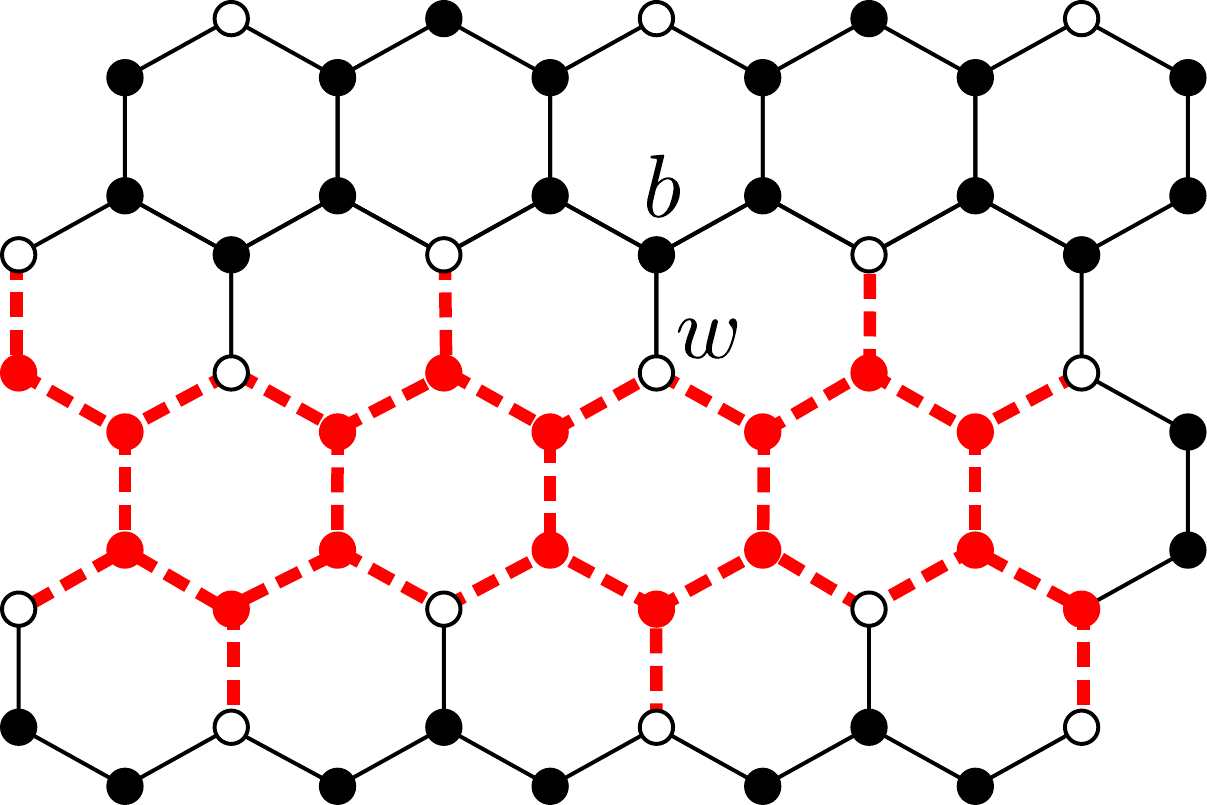}
    \label{fig:hex51c}}
\caption{
Total-threshold-coloring of the hexagonal grid.
(a)~White vertices get color 0, black vertices get one of the colors $-2, -1, 1, 2$.
(b)~A color assignment to $b_1$ can be extended to vertices $w_2$ and $w_3$ based on
the labels of the red dashed edges.
(c)~The process assigns colors for the red vertices.}
\end{figure}

\begin{lemma}
\label{lm:hex51}
Any hexagonal grid is $(5,1)$-total-threshold-colorable.
\end{lemma}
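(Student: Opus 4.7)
The plan is to exploit the bipartiteness of the hexagonal grid. Write $V = W \sqcup B$ for the two color classes, so that every edge joins a white and a black vertex. As a first attempt I would set $c(w)=0$ for every $w \in W$ and, for every $b \in B$, choose $c(b) \in \{\pm 1\}$ if all edges incident to $b$ are labeled $N$ and $c(b) \in \{\pm 2\}$ if all are labeled $F$. This yields a valid $(5,1)$-threshold-coloring whenever every black vertex has uniformly labeled incident edges, which matches the basic setup depicted in Fig.~\ref{fig:hex51a}.

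The interesting case is when some black vertex $b$ has incident edges with mixed labels, and at least one of its white neighbors must then receive a nonzero color in order to separate the near edges from the far edges at $b$. Here I would use Lemma~\ref{pp:p151} as a propagation rule: along a length-two path $w\,b\,w'$ with $c(w)=0$ and any prescribed pair of edge labels, we can always realize any target color $k \in \{-2,-1,1,2\}$ for $w'$ by a suitable $c(b) \in \{-2,-1,1,2\}$. (The case $c(w')=0$ is covered by the uniformly labeled analysis above.) I would then fix a root white vertex $w_0$ with $c(w_0)=0$ and grow a BFS tree on the white vertices whose tree edges are length-two paths $w\,b\,w'$ through intermediate black vertices, applying Lemma~\ref{pp:p151} at each step to color the intermediate black vertex and extend the coloring to the next white vertex. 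This is exactly the local step illustrated in Fig.~\ref{fig:hex51b}.

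The main obstacle, and the technical heart of the proof, is global consistency at the non-tree edges: each hexagonal face is a $6$-cycle traversed by three length-two paths between its three white vertices, and the colors of these white vertices (together with the intermediate black colors) must be jointly compatible with the labels of all six edges of the face. I would exploit the sign freedom in Lemma~\ref{pp:p151} (replacing $c(b)$ by $-c(b)$ where needed) and perform a case analysis on the number of $F$-labeled edges around each face to show that a globally consistent choice of colors always exists, completing the proof.
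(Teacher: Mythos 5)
Your overall strategy---fix an independent set, give it color $0$, and propagate colors to the remaining vertices via Lemma~\ref{pp:p151}---is the right one, but the independent set you choose makes the plan break down. In the bipartition $V=W\sqcup B$ the vertices of $W$ are pairwise at distance $2$, so every black vertex has all three of its neighbors pinned to $0$; as you note, this fails as soon as some black vertex sees mixed labels, and your repair is to let white vertices take nonzero colors. But then Lemma~\ref{pp:p151} no longer applies to propagate from such a vertex: it requires $c(v_0)=0$, and the analogous statement for arbitrary endpoint colors is false (e.g.\ $c(v_0)=-2$, $c(v_2)=2$ with both edges near admits no valid middle color in a $(5,1)$ palette). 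Worse, the ``sign freedom'' you plan to exploit for global consistency does not exist: the proof of Lemma~\ref{pp:p151} shows that once both endpoint colors and both edge labels are fixed, the middle color is \emph{uniquely} determined (magnitude by the edge at the $0$-endpoint, sign by the other edge), so there is no residual slack at the non-tree edges. The case analysis over faces that you defer to is therefore not a routine verification but the entire unproven content of the argument, and with the bipartition as your independent set I do not see how it can succeed.

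The paper sidesteps all of this by choosing a strictly sparser independent set: a set of vertices pairwise at distance at least $3$ (so each remaining vertex has at most one neighbor in the set), which is assigned color $0$ while \emph{every other} vertex receives a color in $\{-2,-1,1,2\}$. With that choice, every edge of the grid lies on a path of length at most $2$ having exactly one $0$-colored endpoint, and the propagation step is always an application of Lemma~\ref{pp:p151} in exactly the form stated: one endpoint colored $0$, the other an already-colored vertex, and only the middle vertex free. The propagation then runs along rows of hexagons that can be colored independently of one another, so no global consistency condition ever has to be checked. To salvage your proof you would need to replace the bipartition class by such a distance-$\ge 3$ independent set and verify that the resulting propagation covers every edge exactly once.
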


\begin{proof}
The coloring is done in two steps. In the first step we assign color $0$ for a set of independent vertices of $G$ as shown
in Fig.~\ref{fig:hex51a}, where the colored vertices are white. Note that no two white vertices have
a shortest path of length less than $3$.

In the second step we find a coloring of the remaining black vertices, using only four colors $\{-2, -1, 1, 2\}$.
Let $w_1$ be a white vertex. We randomly choose one of its black neighbors $b_1$, and
 assign a color for $b_1$ based on the
label of edge $(w_1, b_1)$. Now vertex $b_1$ has two white vertices $w_2$ and $w_3$ within distance $2$. Using Lemma~\ref{pp:p151}
we can (uniquely) extend the coloring of $b_1$ to $w_2$ (symmetrically, to $w_3$) so that additional black vertex $b_2$ gets
a color. Again, the coloring of $b_2$ can be extended to its nearest white neighbor. We continue such a propagation
of colors, see Figs.~\ref{fig:hex51b}~and~\ref{fig:hex51c} where processed black vertices and edges are shown dashed red.
One can easily see that the process will color a row of hexagons with alternate upper and lower legs.

To complete the coloring of $G$ we choose a white vertex in the next row of hexagons and initiate a similar propagation
process. For example, one can use vertices $w$ and $b$ shown in Fig.~\ref{fig:hex51c}.
\end{proof}

\subsection{Octagonal-Square Grid}

\begin{lemma}
\label{lm:84grid51}
Any octagonal-square grid is $(5,1)$-total-threshold-colorable.
\end{lemma}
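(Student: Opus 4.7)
The plan is to adapt the two-phase approach from Lemma~\ref{lm:hex51}. Since every vertex of the octagonal-square grid also has degree $3$ and Lemma~\ref{pp:p151} depends only on paths of length $2$ (not on the underlying tiling), the overall framework carries over directly; the new difficulty is the presence of $4$-cycles (the squares), which were absent from the hexagonal grid.

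First I would select an independent set $W$ of ``white'' vertices, all to receive color $0$, such that any two vertices of $W$ are at graph distance at least $3$ and every other vertex lies within distance $2$ of some vertex of $W$. The periodic structure of the $4.8.8$ tiling admits such a set---for instance, by fixing a checkerboard pattern on the squares and picking one pair of opposite corners from each selected square, possibly augmented by a small number of octagon vertices to guarantee distance-$2$ domination. A short verification on the periodic unit cell confirms both the pairwise-distance and domination conditions.

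Next I would propagate colors outward from $W$. Each black vertex adjacent to a white one inherits a color in $\{-1,1\}$ or $\{-2,2\}$ according to the label of the incident edge, with the sign chosen to remain consistent with later constraints. For every black vertex $b$ sitting as the midpoint of a length-$2$ path between two white vertices, Lemma~\ref{pp:p151} supplies a color for $b$ compatible with both edge labels. The propagation then sweeps row by row across the tiling in the same manner as in the proof of Lemma~\ref{lm:hex51}.

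The main obstacle is global consistency of the propagation, and the worst case is the $4$-cycles. Around an octagon the argument carries over from Lemma~\ref{lm:hex51} essentially unchanged, since the length-$8$ girth and the flexibility in Lemma~\ref{pp:p151} leave enough slack to resolve any labeling. Around a square, however, constraints arriving from distinct white vertices may meet at a single black vertex via two different length-$2$ paths, so the four edge labels on the square's boundary must be reconciled simultaneously. I would close the proof with a short case analysis over the $2^4$ possible labelings of a square, verifying that in each case the colors dictated by Lemma~\ref{pp:p151} along the two sides of the square are mutually compatible; the remaining bookkeeping then parallels Lemma~\ref{lm:hex51} verbatim.
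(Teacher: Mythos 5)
Your high-level plan (an independent set of $0$-colored white vertices plus a propagation of colors $\{-2,-1,1,2\}$ to the black vertices via Lemma~\ref{pp:p151}) is exactly the paper's approach, but the one concrete step you commit to is broken and would sink the argument. You correctly state that the white set $W$ must be pairwise at distance at least $3$ (otherwise a black vertex adjacent to two white vertices is forced to be simultaneously within $1$ of $0$ and more than $1$ from $0$ whenever its two white edges carry different labels), and then you propose to put \emph{two opposite corners of a square} into $W$. Opposite corners of a $4$-cycle are at distance $2$, so each of the other two corners of that square is adjacent to both white vertices; labeling its two square edges $N$ and $F$ makes it uncolorable, and no case analysis over the $2^4$ labelings of the square can repair this. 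The ``short verification on the periodic unit cell'' you defer to would in fact refute your choice of $W$ rather than confirm it.

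The fix is to choose $W$ more sparsely, e.g.\ one fixed corner (say the ``north'' corner) of \emph{every} square. Then $W$ is a distance-$3$ independent set, every black vertex is adjacent to exactly one white vertex (which pins down the magnitude $1$ or $2$ of its color according to that edge's label), and the subgraph induced on the black vertices decomposes into disjoint paths, one per horizontal row of squares. On each such path the only remaining constraints are black--black edges, and since two colors in $\{\pm 1,\pm 2\}$ are near iff they have the same sign, these constraints amount to a sign assignment along a path, which is always consistent. This is precisely the paper's ``rows can be colored independently'' statement, and it also shows that your worry about octagons versus squares is misplaced: once $W$ is chosen correctly there is no cycle in the black subgraph to reconcile at all.
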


\begin{figure}[htb]
    \center
	\subfigure[]{
	\includegraphics[height=3.5cm]{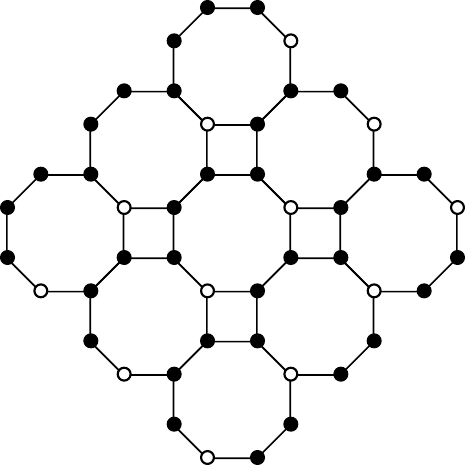}
    \label{fig:8451a}}
~
    \subfigure[]{
    \includegraphics[height=3.5cm]{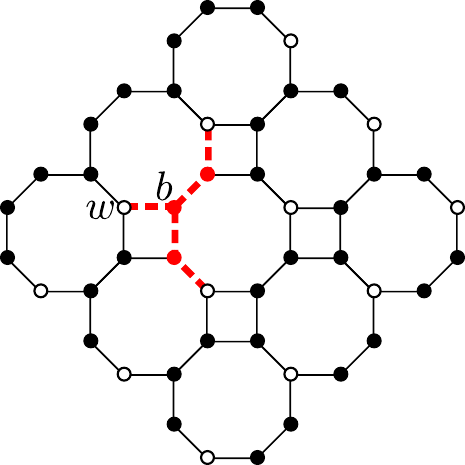}
    \label{fig:8451b}}
~
    \subfigure[]{
    \includegraphics[height=3.5cm]{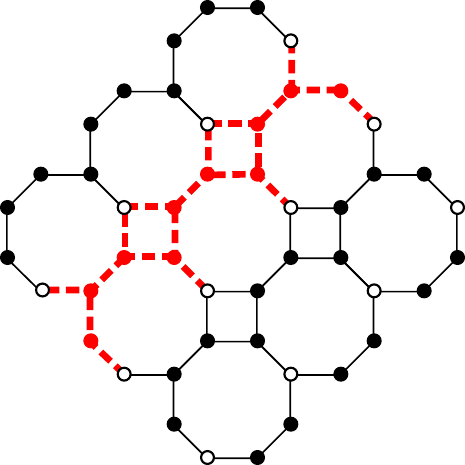}
    \label{fig:8451c}}
\caption{
Total-threshold-coloring of the octagonal-square grid.}
\end{figure}

\begin{proof}
We use colors $\{-2,-1,0,1,2\}$ and threshold $t=1$ to find a coloring; the proof is similar to the proof of Lemma~\ref{lm:hex51}.
We start by partitioning the vertices of the graph into white and black as shown in Fig.~\ref{fig:8451a},
and we assign color $0$ to the white ones. Then we choose a white vertex $w$ and its black neighbor
$b$ as in Fig.~\ref{fig:8451b}, and we assign colors $\{-2, -1, 1, 2\}$ to the ``row'' of black
vertices. It is easy to see that the coloring of rows can be done independently; see~Fig.~\ref{fig:8451c}.
\end{proof}

\subsection{Square-Triangle Grid}

We prove that the graph in Fig.~\ref{fig:43} is not total-threshold-colorable.
Assume to the contrary that $c$ is a $(r,t)$-threshold-coloring. Without loss of generality let
$c(v_0) < c(u_0)$. Since $(v_1,u_0)$ is a far edge and $(v_0,x), (u_0,x)$ are near we have
$c(v_0) < c(x) < c(u_0)$. Similar argument shows that $c(v_1) < c(v_0) < c(x) < c(u_0) < c(u_1)$.
Then if $x<y$, we have $c(v_1) + t < c(x)$ and $c(x) + t < c(y)$, which implies
 $c(v_1) + 2t < c(y)$. This makes it impossible to find a color for $v_2$ near to both $v_1$
 and $y$. Similarly if $x>y$ then it is impossible to color $u_2$.

Theorem~\ref{th:color} summarizes the results in this section.

\begin{theorem}
\label{th:color} Paths, cycles, trees, fans, the hexagonal grid and the octagonal-square grid are total-threshold-colorable.
The triangular grid and triangle-square grid are not total-threshold-colorable.
\end{theorem}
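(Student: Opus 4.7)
The plan is straightforward: this theorem is a summary statement whose constituent pieces have all been established earlier in the section, so the proof reduces to assembling the individual constructions and counterexamples.

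First I would handle the positive cases. For paths and trees, the proof is the explicit $2$-coloring from Section 3.1: root the tree arbitrarily, assign color $0$ to the root, and determine each vertex's color by the parity of the number of far edges along the root-path; this gives a $(2,0)$-threshold-coloring. For cycles, I would invoke the coloring scheme from Section 3.1 using three colors: color one connected near-component with $2$ and color the remainder as a tree with $\{0,1\}$, yielding a $(3,0)$-threshold-coloring. For fans, I would use the construction with $5$ colors $\{-2,-1,0,1,2\}$ and threshold $t=1$, placing the apex at $0$, initially coloring the path with $\pm 1$, and then switching any offending $\pm 1$ to $\pm 2$ to correct far edges to the apex. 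For the hexagonal grid and the octagonal-square grid I would simply cite Lemma~\ref{lm:hex51} and Lemma~\ref{lm:84grid51}, both of which give $(5,1)$-total-threshold-colorings.

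For the negative cases, I would reproduce the two counterexamples already exhibited. For the triangular grid, use the subgraph on $\{v_0,v_1,v_2,u_0,u_1,u_2\}$ from Section 3.3 with $F=\{v_0v_1,v_1v_2,v_2v_0\}$ and $N$ the remaining six edges: ordering the $v_i$ by color gives $c(v_2)-c(v_0)>2t$ from the far edges, while the path through $u_1$ forces $c(v_2)-c(v_0)\le 2t$, a contradiction. For the square-triangle grid, use the subgraph in Fig.~\ref{fig:43}: the edge-labels force $c(v_1)<c(v_0)<c(x)<c(y)$ with gaps strictly larger than $t$ between consecutive values, so $c(y)-c(v_1)>2t$ leaves no room to place $c(v_2)$ within distance $t$ of both $v_1$ and $y$.

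Since the statement is a disjunction of independently established claims, I do not anticipate any real obstacle; the only thing to be careful about is making sure the counterexample subgraphs are genuinely induced subgraphs of the respective infinite grids, so that a hypothetical total-threshold-coloring of the grid would restrict to a threshold-coloring of the offending configuration with respect to the chosen edge-labeling. Since each counterexample is a small local pattern that clearly embeds in its grid, this verification is immediate.
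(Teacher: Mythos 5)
Your proposal is correct and matches the paper exactly: the paper presents Theorem~\ref{th:color} purely as a summary of the constructions and counterexamples established earlier in Section~3, and your assembly of the $(2,0)$-coloring for trees, the $(3,0)$-coloring for cycles, the $(5,1)$-colorings for fans and the two grids, and the two local counterexamples is precisely that argument. Your closing remark about the counterexamples embedding as subgraphs of the infinite grids is a sensible (and correct) point of care that the paper leaves implicit.
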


\section{Planar Graphs without Short Cycles}
\label{sec:WSC}

In the cases where we have counter-examples of total-threshold-colorability (e.g., $K_4$ and the triangular grid) we
have short cycles, which can be used to force groups of vertices to be simultaneously near and far. In this
section we show that if we consider graphs without short cycles, we can prove total-threshold-colorability.

\begin{theorem}
\label{thm:no9cyc}
Let $G$ be a planar graph without cycles of length $\le 9$.
Then $G$ is $(8,2)$-total-threshold-colorable\footnote{Equivalently, the \emph{girth} (that is, the shortest cycle) of $G$
should be $\ge$ 10.}.
\end{theorem}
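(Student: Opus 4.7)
The plan is to proceed by induction on $|V(G)|$, identifying reducible substructures whose removal preserves the girth-$10$ condition (automatically inherited by subgraphs) and whose coloring can be extended from any coloring of the reduced graph. Two families of reducible configurations suggest themselves: \textbf{(a)} a vertex of degree at most $1$, and \textbf{(b)} a path $v_0 v_1 \cdots v_k v_{k+1}$ whose $k$ internal vertices all have degree exactly $2$ in $G$, with $k$ above some threshold $k_0$ that will be dictated by the extension step. Both reductions yield smaller planar graphs that still satisfy the girth hypothesis, so the inductive hypothesis applies to $G-v$ or to $G \setminus \{v_1,\dots,v_k\}$.

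The first technical step is a \emph{structural lemma} via discharging. I would assign each vertex $v$ initial charge $d(v)-4$ and each face $f$ initial charge $l(f)-4$; Euler's formula gives a total of $-8$. Since girth is $\ge 10$ every face has charge $\ge 6$, and I would design redistribution rules sending this surplus from each face to its incident degree-$2$ and degree-$3$ vertices, and further along chains of degree-$2$ vertices. The goal is to show that if $G$ contains neither configuration (a) nor a sufficiently long type-(b) path, then every vertex and face ends with non-negative charge, contradicting the total of $-8$. The girth hypothesis is strong enough to let charge propagate several steps along a degree-$2$ chain before encountering a branching vertex.

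The second step is the \emph{extension lemma}. For (a), if $v$ has a unique neighbor $u$, I pick $c(v)$ in the near-ball $[c(u){-}2,c(u){+}2]\cap[1,8]$ if the edge is labeled $N$, or in its complement if labeled $F$; each of these sets has at least $3$ elements in the range $[1,8]$, so a valid choice always exists. For (b), the task is: given fixed endpoint colors $c(v_0),c(v_{k+1})\in[1,8]$ and an arbitrary labeling of the $k+1$ path-edges, color $v_1,\dots,v_k$ consistently. I encode this as a layered transition digraph on $\{0,\dots,k+1\}\times[1,8]$ with an arc from $(i,a)$ to $(i+1,b)$ whenever $(a,b)$ respects the label of $v_iv_{i+1}$ (near-adjacency in $P_8^2$ for $N$, far-adjacency in its complement for $F$). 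A valid extension exists iff there is a directed $(k+1)$-step path from $(0,c(v_0))$ to $(k+1,c(v_{k+1}))$. I would verify by case analysis that the near-graph $P_8^2$ has diameter $4$ and the far-graph is likewise well-connected, so that for $k$ at least a small constant $k_0$ (I expect $k_0 \le 4$, forced by the worst case of all-$N$ labels with endpoints $1$ and $8$), every label pattern admits such a path.

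The main obstacle is calibrating the discharging so that the degree-$2$ paths it produces are long enough to meet the bound $k_0$ required by the extension lemma. A naive discharging might yield only adjacent pairs of degree-$2$ vertices, which is insufficient, since even a $2$-internal-vertex path can fail to extend, e.g.~with $c(v_0)=1$, $c(v_3)=8$ and all edges labeled $N$. If my first discharging scheme does not give a long enough path, I would either refine the rules to push charge further along degree-$2$ chains, or enlarge the list of reducible configurations in (a) (for instance, including a degree-$2$ vertex adjacent to a degree-$1$ vertex, or similar light substructures) to relieve the burden on the long-path case. Once the two lengths are aligned, the induction closes and yields the $(8,2)$-total-threshold-coloring.
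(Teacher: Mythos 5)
Your overall strategy (minimal-counterexample/induction plus discharging plus reducible configurations, with the extension lemmas proved by analyzing which colors in $\{1,\dots,8\}$ are reachable along a path) is the same as the paper's. However, there is a genuine gap in the structural half of your argument: the two families of reducible configurations you propose do not suffice, and no calibration of the discharging can make them suffice. Concretely, take the dodecahedron (cubic, planar, girth $5$) and subdivide every edge exactly once. The result is a planar graph of girth $10$ with $50$ vertices and $60$ edges in which every vertex has degree $2$ or $3$, and every degree-$2$ vertex has both neighbours of degree $3$. It therefore contains neither a vertex of degree $\le 1$ nor a path of two or more consecutive internal degree-$2$ vertices. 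Since, as you yourself observe, a chain of $k\le 2$ internal degree-$2$ vertices is \emph{not} reducible (e.g.\ one internal vertex with both endpoints coloured $4$ and labels $N,F$ has no valid colour, as the near-ball and far-set of $4$ are disjoint), your induction has nothing to remove from this graph. You flag this risk in your last paragraph and propose to ``enlarge the list of reducible configurations,'' but that enlargement is precisely where the substance of the proof lies, and it is not a minor patch.

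What is actually needed — and what the paper does — is to take as reducible configurations certain \emph{trees with branch vertices}: subdivided stars whose prongs (leaf-to-centre paths) have prescribed short lengths, and two further tree types containing two adjacent degree-$3$ branch vertices. The corresponding extension lemmas must then show that an arbitrary colouring of all the \emph{leaves} of such a tree extends inward, which is strictly harder than the two-endpoint path extension you analyze (it requires tracking, for each prong length, exactly which centre colours are always legal: one of $\{1,6\}$ for $1$-prongs, $3$ or $4$ with one exceptional case for $2$-prongs, all of $\{1,3,4,6\}$ for $3$-prongs). The discharging must then be refined accordingly — the paper uses charges $4\deg(v)-10$ and $\deg(f)-10$ and a two-phase redistribution among vertices only — to show that a minimal counterexample with none of these tree configurations cannot exist. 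Your path-extension analysis (the layered transition digraph on $\{0,\dots,k+1\}\times[1,8]$, the diameter of the near-graph, and the bound $k_0\le 4$; the paper in fact gets $k_0=3$) is sound and recovers the easy part of the paper's Claims 1--4, but without the star and two-branch-vertex configurations the induction does not close.
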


The outline of our proof for Theorem~\ref{thm:no9cyc} is as follows. We first find some small
tree structures $T$ that are ``reducible'', in the sense that for any edge-labeling of $T$ and
any given fixed coloring of the leaves of $T$ to the colors $\{0, 1, \ldots, 7\}$, there is a
$(8,2)$-threshold-coloring
of $T$. For a contradiction assume that there is a planar graph with girth $\ge 10$ having
no $(8,2)$-threshold-coloring. We consider the minimal such graph $G$, and by a discharging
argument prove that $G$ contains at least one of these reducible tree structures. This
contradicts the minimality of $G$. We start with some technical claims.

\noindent
\paragraph{Extending a coloring.} Let $P_n$ be a path with vertices $v_0, \dots, v_n$.
Given an edge-labeling of $P_n$ and the color $c_0$ of $v_0$ we call a color $c_n$ \emph{legal} if there exists
a $(8,2)$-threshold-coloring $c$ of $P_n$, so that $c(v_0)=c_0$ and $c(v_n)=c_n$.

\begin{pp}
\label{pp:1}
Let $P_1$ be a path of length $1$. Then at least one of the colors $1$ or $6$ is legal (irrespective of the edge label and the color $c_0$).
\end{pp}

\begin{proof}
One only needs to observe that color $1$ is close to $0,1,2,3$, and is far from $4,5,6,7$, i.e. the distance between colors is at most 2
or strictly more than 2, respectively. The result follows by symmetry.
\end{proof}

\begin{pp}
\label{pp:2}
Let $P_2$ be a path of length $2$. Then $3$ is legal unless $c_0=3$ and $\{N,F\}=\{\{e_1\},\{e_2\}\}$, i.e.\ the
edges $e_1$ and $e_2$ are labeled differently. Symmetrically, $4$ is legal unless $c_0=4$ and
$\{N,F\}=\{\{e_1\},\{e_2\}\}$.
\end{pp}

\begin{proof}
By symmetry we only give the proof for the case $c_2=3$. If $N=\{e_1,e_2\}$ then we choose $c(v_1)$ to be the average of
$c_0$ and $c_2$, rounding if necessary. If $F=\{e_1,e_2\}$, then one of $0,7$ is a good choice for $c(v_1)$, as both 0,7 are far
from $c_2=3$, and at least one is far from $c_0$. In the remaining case we may assume that $c_0\ne 3$. If $c_0 < 3$, then set
$c(v_1)=0$ or $c(v_1)=5$ in case $e_2 \in F$ or $e_2 \in N$, respectively. If $c_0 > 3$, then set $c(v_1)=6$ or $c(v_1)=1$ in case $e_2 \in F$ or $e_2 \in N$, respectively.
\end{proof}

\begin{pp}
\label{pp:3}
Let $P_3$ be a path of length $3$. Then $1,3,4,$ and $6$ are all legal (irrespective of the edge label and the color $c_0$).
\end{pp}
% GF
% proofs in the Appendix (or not at all)
\begin{proof}
By symmetry it is enough to find appropriate coloring extensions for which $c(v_3)=1$ and $c(v_3)=3$.
For the latter, choose $c_1=c(v_1) \ne 3$, according to $c_0$ and the label of $e_1$. Now by~\ref{pp:2} this choice of $c_1$ can be
extended to the remaining part of $P_3$, so that $c(v_3)=3$.
The goal $c(v_3)=1$ splits into two subcases. If $c_0 \ne 3,4$, then by~\ref{pp:2} both $3$ and $4$ are possible color choices for
$c(v_2)$. One is close and the other is far from $1$. In case $c_0$ is either $3$ or $4$, then again by~\ref{pp:2} both $1$ and $6$
are possible choices for $c(v_2)$. Again, the former is close and the latter is far from $1$.
\end{proof}

A \emph{star} is a subdivision of the graph $K_{1,n}$, and its \emph{center} is the single vertex of degree $\ge 3$.
Let $T$ be a star. A \emph{prong} of $T$ is a path
from a leaf to the center of $T$, and a prong
 with $k$ edges is called a $k$-prong, we say that it has length $k$.

\begin{pp}
\label{pp:3a}
Let $T$ be a subdivision of $K_{1,3}$ with prongs of length $1,2,$ and $3$, respectively. Assume that the leaves of
$T$ are assigned colors, so that the leaf $u$ on the $1$-prong is colored with either $1$ or $6$. Then we can extend
this partial coloring to the whole $T$.
\end{pp}
%\begin{figure}[htbp] %you can leave this figure out
%\begin{center}
%\includegraphics[width=35mm]{figures/}
%\end{center}
%\caption{Coloring can be extended to whole $T$.}
%\label{fig6}
%\end{figure}
\begin{proof}
% GF
% This is a sketch, you can include full version in the Appendix, but
% you do not need to.
Let $v$ be the center of $T$. Given $c(u)$, we can choose $c(v) \in \{3,4\}$ so that the labeling condition on
the $1$-prong is satisfied. If this choice cannot be extended to the longer prongs, then the leaf of the 2-prong
is also colored with either $3$ or $4$, see~\ref{pp:2}. But then the choice $c(v) \in \{1,6\}$ which satisfies
the labeling condition on the $1$-prong can be extended to the remaining prongs.
\end{proof}

%\subsubsection*{Reducible configurations}

\noindent
\paragraph{Reducible configurations.} A \emph{configuration} is a tree $T$, and is \emph{reducible} if every assignment of colors to the leaves of $T$
can be, for every possible edge-labeling of $T$, extended to a $(8,2)$-threshold-coloring $c$ of the whole $T$.

\begin{pp}
\label{pp:4}
A path $P_4$ of length 4 is a reducible configuration.
\end{pp}
\begin{proof}
% GF
% This is a shorter proof, leave here or move to Appendix.
Let $v$ be a neighbor of a leaf in $P_4$. By~\ref{pp:1} and \ref{pp:3} either $c(v)=1$ or $c(v)=6$ extends to the remaining uncolored vertices.
\end{proof}

Now~\ref{pp:4} implies that longer paths are reducible as well. Let us turn our attention to stars.
%
%\begin{figure}[htbp] %you can leave this figure out
%\begin{center}
%\includegraphics[width=10cm]{figures/}
%\end{center}
%\caption{Reducible configurations --- stars}
%\label{fig2}
%\end{figure}
%

\begin{pp}
\label{pp:5}
\begin{enumerate}
\item[(A)] Let $T$ be a star with at most 1 prong of length $1$ and the remaining prongs have
 length $3$. Then $T$ is reducible.
\item[(B)] Let $T$ be a star with at most 3 prongs of length $2$ and the remaining prongs have length $3$. Then $T$ is reducible.
\end{enumerate}
\end{pp}
\begin{proof}
% GF
% This is a shorter proof, leave here or move to Appendix.
In both cases let $v$ denote the center of the star.
In order to establish (A) let $c(v)$ be either $1$ or $6$, which is appropriate for the 1-prong (such a choice exists
by~\ref{pp:1}). By~\ref{pp:3} the coloring $c(v)$ can be extended to the remaining 3-prongs.
For (B) we may assume that neither $3$ nor $4$ can be extended to all three $2$-prongs. By~\ref{pp:2} both colors $3$
and $4$ are used at leaves of the $2$-prongs. Now, by~\ref{pp:1} at least one of $c(v)=1$ or $c(v)=6$ extends to the
third 2-prong, and hence also to the remaining 2- and 3-prongs, by~\ref{pp:2} and~\ref{pp:3}.
\end{proof}

\begin{pp}
\label{pp:5a}
There exist two additional types T1 and T2 of reducible configurations shown in Fig.~\ref{fig3}.
\end{pp}
\begin{figure}[t]
    \center
	\subfigure{
    \includegraphics[height=2.3cm]{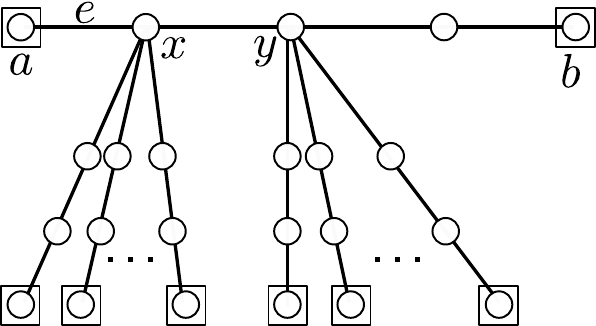}}
~~~~~~~~~~~~~~~~~
    \subfigure{
    \includegraphics[height=2.3cm]{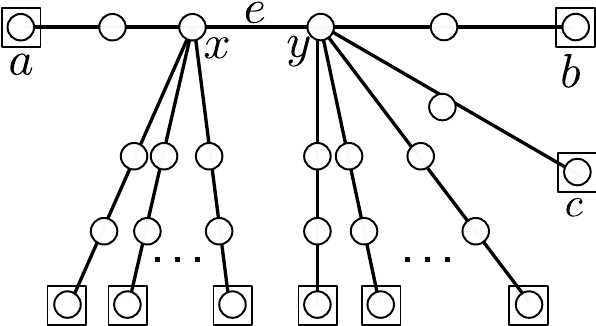}}
    \label{fig3}
    \caption{Two additional types of reducible configurations, T1 and T2.}
\end{figure}

\begin{proof}
% GF
% leave this proof
Let us first consider the T1 configuration. By~\ref{pp:1} one of 1,6 is appropriate for the color of $x$, with
respect to color $a$ and type of edge $e$. If $b \in \{3,4\}$, then choose $c(y)$ from $1,6$, and if
$b \not\in \{3,4\}$ then choose $c(y)$ from $\{3,4\}$. By~\ref{pp:2} this works.

Let us now turn to T2. If $e$ is a near edge, we might as well contract $e$ (which implies both $x$ and $y$ will
receive the same color), and reduce to~\ref{pp:5}(B).

Hence we shall assume $e$ is a far edge. Assume first that coloring vertex $x$ with both $1$ and $6$ extends to
the left 2-prong at $x$. If $c(x)=1$ and $c(y)=4$ does not extend to the right $2$-prongs at $y$, we may assume
$b=4$. If $c(x)=6$ and $c(y)=3$ does not extend to the right $2$-prongs at $y$, we may assume $c=3$. In this
case setting $c(x)=1$ and $c(y)=6$ extends to the right.

By~\ref{pp:1} we may assume that only one of $c(x)=1$ or $c(x)=6$ extends to the left 2-prong at $x$, without loss of
generality the former. Now $a \ne 3$ and $a \ne 4$, and both $c(x)=3$ and $c(x)=4$ extend left.
A choice of $c(x)=1, c(y)=4$ does not extend to the right $2$-prongs at $y$ only if, say, $b$ is equal to $4$. But now
at least one of $c(y)=1$ or $c(y)=6$ extends to the right 2-prongs, and such a choice can be complemented with $c(x)=4$
or $c(x)=3$, respectively.
\end{proof}

\noindent
\paragraph{Discharging.} A \emph{minimal counterexample} is the smallest possible (in terms of order) planar graph $G$ without cycles of length
$\le 9$ which is not $(8,2)$-total-threshold-colorable. A minimal counterexample $G$ does not contain reducible configurations. Further $G$ is connected and has no vertices of degree $1$. As $G$ is also not a
cycle (such a cycle should be of length $\ge 9$ and should not contain a $P_4$), and is therefore homeomorphic to a
(multi)graph of minimal degree $\ge 3$.

Let us fix its planar embedding determining its set of faces $F(G)$.
Let us define \emph{initial charges}: initial charge of a vertex $v$, $\charge_0(v)$, is equal to $4\deg(v)-10$, and the
initial charge of a face $f$, $\charge_0(f)$, is equal to $\deg(f) -10$. A routine application of Euler formula shows
that the total initial charge is $-20$.

As all faces have length $\ge 10$, every face is initially non-negatively charged. We shall not alter the charges of faces.

The following table shows initial charges of vertices according to their degree:
\begin{center}
\begin{tabular}{l|rrrrrrr}
degree $\deg(v)$ & 2 & 3 & 4 & 5 & 6 & 7 & $\cdots$
\\ \hline
initial charge $\charge_0(v)$ & $-2$ & $2$ & $6$ &
$10$ & $14$ & $18$ & $\cdots$
\end{tabular}
\end{center}

The discharging procedure will run in two phases, by $\charge_i(v)$ we shall denote the charge of vertex $v$ after Phase $i$ of
discharging. Informally, Phase 1 shall see that vertices of degree 2 do not have negative charges, and Phase 2 will
leave only vertices of degree 3 with a possible negative charge.

Let $u, v$ be vertices of $G$. We say that $u$ and $v$ are $2$-adjacent, if $G$ contains a $u-v$-path  whose (possible)
internal vertices all have degree $2$. In Phase 1 we redistribute charge according to the following rule:
\medskip

\noindent
{\bf Rule 1}: every vertex $v$ of degree $\ge 3$ sends charge $1$ to every vertex $u$ of degree $2$, for which $v$ and $u$ are \emph{$2$-adjacent}.
\medskip

In {Phase 2} we shall apply the following rule:
\medskip

\noindent
{\bf Rule 2}: If $u$ and $v$ are adjacent with $\charge_1(u) > 0, \charge_1(v) < 0$ then $u$
 sends charge $1$ to $v$.
\medskip

As every vertex $u$ of degree 2 (we also call them \textit{2-vertices}) is 2-adjacent to exactly two vertices of bigger degree, we have $\charge_1(u) = 0$ in this
case. For a vertex $v$ of degree $\ge 3$, the discharging in Phase 1 decreases the charge of $v$ by the number of
2-vertices which are 2-adjacent to $v$.

Let $v$ be a vertex of degree $\ge 3$. A \emph{prong at $v$} is a $v-x$-path whose other end-vertex $x$ is of
degree $\ge 3$ and has internal vertices of degree 2.
\begin{pp}
\label{pp:6}
Let $v$ be a vertex of degree $\ge 3$. Then the number of 2-vertices that are 2-adjacent to $v$ is at most $2 \cdot \deg(v) - 3$.
\end{pp}
\begin{proof}
% GF
% This is a sketch, you can include full version in the Appendix, but
% you do not need to.
By~\ref{pp:4} each prong at $v$ contains at most two vertices of degree $2$. If the shortest prong at $v$ has length $1$,
then~\ref{pp:5} implies that at least one other prong has length $\le 2$. If the shortest prong at $v$ has length $2$,
then by~\ref{pp:5} we have at least four prongs that are of length $\le 2$, and the result follows.
\end{proof}

Now~\ref{pp:6} serves as the lower bound for vertex charges after Phase 1, and in turn prepares us for the Phase 2 of discharging. \begin{pp}
\label{pp:7}
\begin{enumerate}
\item[(A)] Let $v$ be a vertex of degree $3$. If $\charge_1(v) < 0$, then $\charge_1(v) = -1$ and the prongs at $v$ have lengths $1,2$ and $3$, respectively.
\item[(B)] Let $v$ be a vertex of degree $3$. If $\charge_1(v) = 0$, then the prongs at $v$ have either lengths $1,1,3$ or $1,2,2$.
\item[(C)] Let $v$ be a vertex of degree $3$ with its prongs of length $1,1,$ and $2$. Then $\charge_1(v) = 1$.
\item[(D)] Let $v$ be a vertex of degree $3$ with all $3$ prongs of length $1$. Then $\charge_1(v) = 2$.
\item[(E)] If $v$ is a vertex of degree $\ge 4$, then $\charge_2(v) \ge 0$, and also $\charge_2(v)$ is not smaller than the number of $1$-prongs at $v$.
\end{enumerate}
\end{pp}
\begin{proof}
% GF
% leave this proof
Let us first prove (E). Choose a vertex $v$ with $\deg(v) \ge 4$. For every prong of length $3$, $v$ sends 2 units of
charge in Phase $1$. For every shorter prong $v$ sends at most $1$ unit of charge in either Phase 1 or Phase 2. The
total charge sent out of $v$ in both of the phases is by~\ref{pp:5} and~\ref{pp:6} at most
 $2 \deg(v) - 2$. Hence
$\charge_2(v) \ge (4 \deg(v) - 10)-(2 \deg(v)-2)=2 \deg(v)-8 \ge 0$.

The other cases merely stratify vertices of degree $3$ according to the number of their $2$-neighbors of degree $2$.
\end{proof}

Now~\ref{pp:7}(E) states that every vertex $v$ of degree $\ge 4$ satisfies $\charge_2(v) \ge 0$. Similarly, if a $3$-vertex
$u$ is adjacent to a vertex $v$ whose degree is at least $4$, then also $\charge_2(u) \ge 0$. This fact follows from
either~\ref{pp:7}(A) and (E) (in case $\charge_1(u)<0$), or from either~\ref{pp:7}(C) or (D) (if $\charge_1(v)>0$) as in this case
$u$ cannot send excessive charge in Phase 2.

\begin{pp}
\label{pp:8}
No vertex $v$ has $\charge_2(v)<0$ and $\charge_1(v)<0$.
\end{pp}
\begin{figure}[t]
    \center
	\subfigure{
    \includegraphics[height=2.2cm]{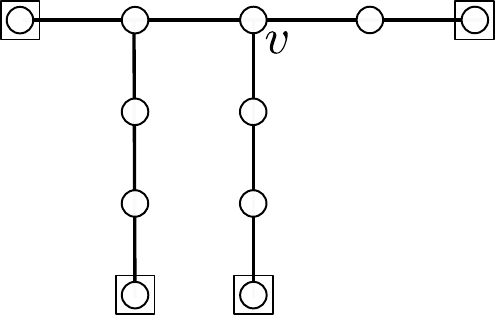}}
~~~~~
    \subfigure{
    \includegraphics[height=2.2cm]{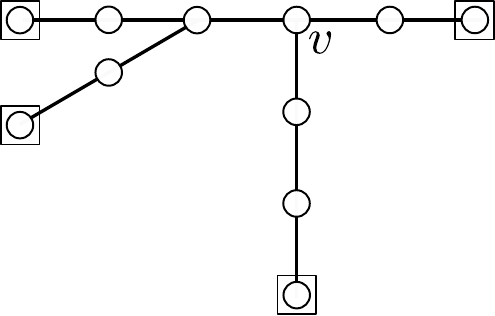}}
~~~~~
    \subfigure{
    \includegraphics[height=2.2cm]{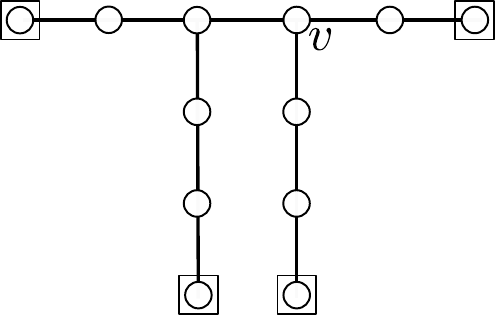}}
    \label{fig4}
    \caption{Negatively charged vertex $v$ after both phases induces a reducible configuration.}
\end{figure}
\begin{proof}
% GF
% leave this proof
Let $v$ be a vertex satisfying both $\charge_2(v)<0$ and $\charge_1(v)<0$. By~\ref{pp:7} $\deg(v) =3$ and
 $v$ has prongs of length $1,2,3$. Let $u$ be the only neighbor of $v$ of degree $\ne 2$. Since $v$ has received no charge from $u$ in Phase 2 we have both $\deg(u)=3$ and $\charge_1(u)\le 0$. By~\ref{pp:7} the prongs of $u$ are of lengths $1,2,3$
or $1,1,3$ or $1,2,2$. Hence $G$ contains one of configurations shown in Fig.~\ref{fig4}.

Now observe that these are reducible, as each matches one of T1 or T2 types of reducible configurations~\ref{pp:5a}.
\end{proof}

\begin{pp}
\label{pp:9}
No vertex $v$ has $\charge_2(v)<0$ and $\charge_1(v)\ge 0$.
\end{pp}
\begin{proof}
% GF
% leave this proof
If $\charge_1(v)=0$, then also $\charge_2(v)=0$, as Rule 2 does not reduce charge of a discharged vertex. By~\ref{pp:7}(E) vertices of
degree $\ge 4$ do not have negative charge after Phase 2.

Hence we may assume that $v$ has degree $3$, $\charge_1(v)> 0$, and $\charge_2(v)<0$.
By~\ref{pp:7}(C) and (D) every neighbor $u$ of $v$ satisfies either $\deg(u)=2$ or $\deg(u)=3$ and $\charge_1(u)<0$.
There are exactly two possible cases and they are shown in Fig.~\ref{fig5}.
\begin{figure}[t]
    \center
	\subfigure{
    \includegraphics[height=2.9cm]{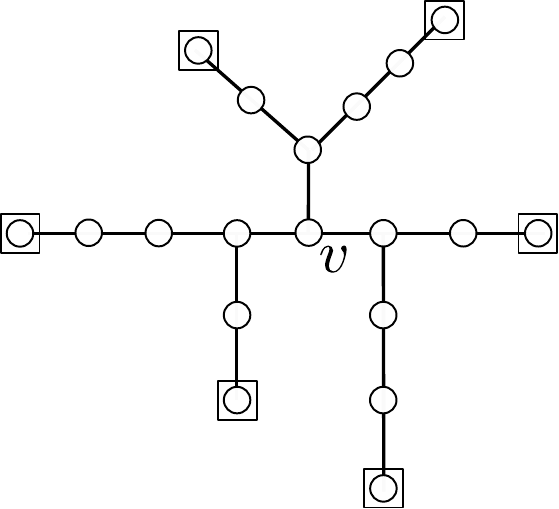}}
~~~~~~~~~~~~~~~~~~~~~~
    \subfigure{
    \includegraphics[height=2.9cm]{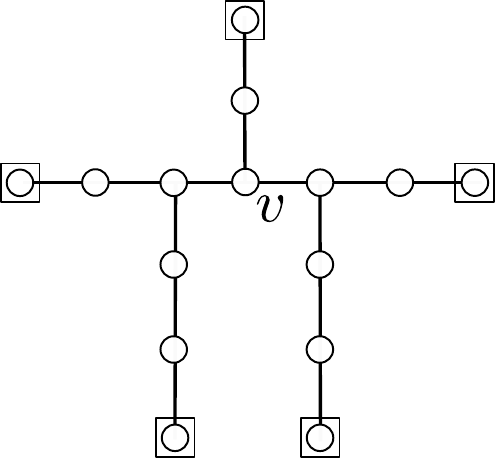}}
    \label{fig5}
    \caption{Negatively charged vertex $v$ after Phase 2, its charge was positive after Phase 1.}
\end{figure}
It is enough to see that there exists a color choice $c(v)$ which can be extended in the 2-prong and/or
stars centered at neighbors of $v$.

Let us first settle the option shown in the right. By~\ref{pp:1} at least one of $c(v)=1$ or $c(v)=6$ extends to
the top $2$-prong, and this choice also extends to the two copies of $T$, see~\ref{pp:3a}.
The left case is even easier, as both choices $c(v)=1$ and $c(v)=6$ extend to the three copies of $T$, again by~\ref{pp:3a}.
\end{proof}

Now~\ref{pp:8} and \ref{pp:9} imply that no vertex has negative charge after Phase 2 of the discharging procedure.
As the total charge remains negative and the faces cannot have negative charges we have a contradiction, which completes
the proof of Theorem~\ref{thm:no9cyc}.

\section{Unit-Cube Contact Representations of Graphs}
\label{sect:unit}

\begin{lemma}
\label{lem:threshold-cube}
If $G$ has a unit-cube contact representation $\Gamma$
 so that one face of each cube is co-planar in $\Gamma$, then any threshold subgraph of
 $G$ also has a unit-cube representation.
\end{lemma}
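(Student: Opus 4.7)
The plan is to build a unit-cube contact representation of $H$ directly from $\Gamma$ by vertically translating each cube. Fix an $(r,t)$-threshold-coloring $c:V\to\{1,\dots,r\}$ of $G$ that realizes the edge partition $\{N,F\}$ with $N=E(H)$. Without loss of generality the common coplanar faces in $\Gamma$ are the bottom faces, lying in the plane $z=0$. I would first rescale $\Gamma$ uniformly by a factor $\alpha>0$ to be fixed below, so that the cubes have side length $\alpha$ and still share the plane $z=0$. Next, for each vertex $v$, lift the cube representing $v$ in the $z$-direction by $c(v)$, leaving its $x$- and $y$-coordinates unchanged. A final global rescaling by $1/\alpha$ restores the cubes to unit size without altering the combinatorial contact structure.

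The heart of the argument is a case analysis on a pair $\{u,v\}$. For an edge $(u,v)\in N$ we have $|c(u)-c(v)|\le t$, so the vertical intervals $[c(u),c(u)+\alpha]$ and $[c(v),c(v)+\alpha]$ overlap on a subinterval of length $\alpha-|c(u)-c(v)|\ge\alpha-t$, while the $x,y$-projections retain the same positive-area intersection they had in $\Gamma$. Hence the shared face survives with positive area provided $\alpha>t$. For an edge $(u,v)\in F$ we have $|c(u)-c(v)|\ge t+1$, so the two vertical intervals meet in at most a single point as long as $\alpha\le t+1$, destroying the contact. Any $\alpha\in(t,t+1]$ therefore works; the concrete choice $\alpha=t+1$ is convenient.

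The only thing requiring care is ensuring that no new contacts are created among pairs that were non-edges of $G$. For such a pair the two cubes shared at most a set of zero area in $\Gamma$, and since vertical translation never enlarges the $x,y$-intersection of two axis-aligned cubes, no positive-area contact can appear regardless of the vertical overlap. I do not expect a substantive obstacle; the one subtle point is that the coplanarity hypothesis on $\Gamma$ is used essentially, since it makes the operation ``lift by $c(v)$'' well defined and ensures that the vertical overlap of any two lifted cubes equals exactly $\alpha-|c(u)-c(v)|$. With these observations the lemma reduces to the two inequalities $\alpha>t$ and $\alpha\le t+1$, which the threshold condition guarantees can be simultaneously satisfied.
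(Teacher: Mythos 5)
Your proof is correct and takes essentially the same approach as the paper: rescale the cubes so their side length lies between $t$ and $t+1$, lift each cube vertically by its color, and check the three cases (near edge, far edge, non-edge of $G$). The only cosmetic difference is that the paper keeps the side length strictly below $t+1$ (it uses $t+\epsilon$ with $0<\epsilon<1$), whereas your endpoint choice $\alpha=t+1$ permits a degenerate touching between far cubes whose colors differ by exactly $t+1$; since that touching has zero area it does not count as a contact, so your choice also works.
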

\begin{proof} Let $H=(V,E_H)$ be a threshold subgraph of $G=(V,E_G)$ and let
 $c:V\rightarrow [1\dots r]$ be an $(r,t)$-threshold-coloring of $G$ with respect to the edge-partition
 $\{E_H, E_G-E_H\}$. We now compute a unit-cube contact representation of $H$ from
 $\Gamma$ using $c$.

Assume (after possible rotation and translation) that the bottom face for each
 cube in $\Gamma$ is co-planar with the plane $z=0$; see Fig.~\ref{fig:threshold-cube}(a).
 Also assume (after possible scaling)
 that each cube in $\Gamma$ has side length $t+\epsilon$, where $0<\epsilon<1$.
 Then we can obtain a unit-cube contact representation of $H$ from $\Gamma$ by lifting
 the cube for each vertex $v$ by an amount $c(v)$ so that its bottom face is at $z=c(v)$; see
 Fig.~\ref{fig:threshold-cube}(b).
 Note that for any edge $(u,v)\in E_H$, the relative distance between the bottom faces of the
 cubes for $u$ and $v$ is $|c(u)-c(v)|\le t<(t+\epsilon)$; thus the two cubes maintain contact. On the other hand, for each pair of vertices $u,v$ with
 $(u,v)\notin E_H$, one of the following two cases occurs: (i) either $(u,v)\notin E_G$ and
 their corresponding cubes remain non-adjacent as they were in $\Gamma$; or
 (ii) $(u,v)\in (E_G-E_H)$ and the relative distance between the bottom faces of the two
 cubes is $|c(u)-c(v)|\ge (t+1)>(t+\epsilon)$, making them non-adjacent.
\end{proof}

\begin{corollary}
\label{cor:cube-thr} Any subgraph of hexagonal and octagonal-square grid
 has a unit-cube contact representation.
\end{corollary}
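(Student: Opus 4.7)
The plan is to combine Lemma~\ref{lem:threshold-cube} with the total-threshold-colorability of the hexagonal and octagonal-square grids established in Lemmas~\ref{lm:hex51} and~\ref{lm:84grid51}. First, I would exhibit, for both grid classes, a unit-cube contact representation in which all bottom faces lie on a common horizontal plane. Such a representation is obtained by fixing a planar edge-contact representation of the grid by unit squares in the plane $z=0$ and extruding each square upward into a unit cube; the low vertex degrees and the regular structure of both grids permit such axis-aligned layouts in a straightforward way.

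Second, given an arbitrary subgraph $H$ of $G$, where $G$ is the hexagonal grid or the octagonal-square grid, I would extend $H$ to a spanning subgraph $H'$ of $G$ by adding the missing vertices of $G$ as isolated vertices in $H'$. By Lemma~\ref{lm:hex51} (respectively Lemma~\ref{lm:84grid51}), $G$ is $(5,1)$-total-threshold-colorable, so there is a $(5,1)$-threshold-coloring of $G$ with respect to the edge-partition $\{E_{H'},\, E_G \setminus E_{H'}\}$. Thus $H'$ is a threshold subgraph of $G$. Applying Lemma~\ref{lem:threshold-cube} to the representation from the first step together with this coloring then produces a unit-cube contact representation of $H'$; since the added vertices are isolated in $H'$, the threshold condition forces their cubes away from all relevant contacts, and removing them yields the desired unit-cube contact representation of $H$.

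The main obstacle is the first step: exhibiting the explicit co-planar-bottom unit-cube representations of the two grid classes. Once this geometric base is fixed, the rest of the argument is a direct application of the lemmas already developed in this and earlier sections.
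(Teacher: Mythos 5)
Your proposal is correct and follows essentially the same route as the paper: the corollary is obtained by combining the co-planar-bottom unit-cube representations of the two grids (which the paper simply exhibits in a figure, just as you propose to construct them) with Lemmas~\ref{lm:hex51} and~\ref{lm:84grid51} via Lemma~\ref{lem:threshold-cube}. Your extra step of padding a non-spanning subgraph with isolated vertices and then discarding their cubes is a small point of care that the paper leaves implicit, not a different argument.
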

\begin{proof} This follows from the fact that each of these grids has a unit-cube contact
 representation [Fig.~\ref{fig:grid-cube}(c)--(d)] and is total-threshold-colorable
 [Lemmas~\ref{lm:hex51},~\ref{lm:84grid51}].%,~\ref{lm:penta}].
\end{proof}

\begin{figure}[t]
\centering
\includegraphics[width=\textwidth]{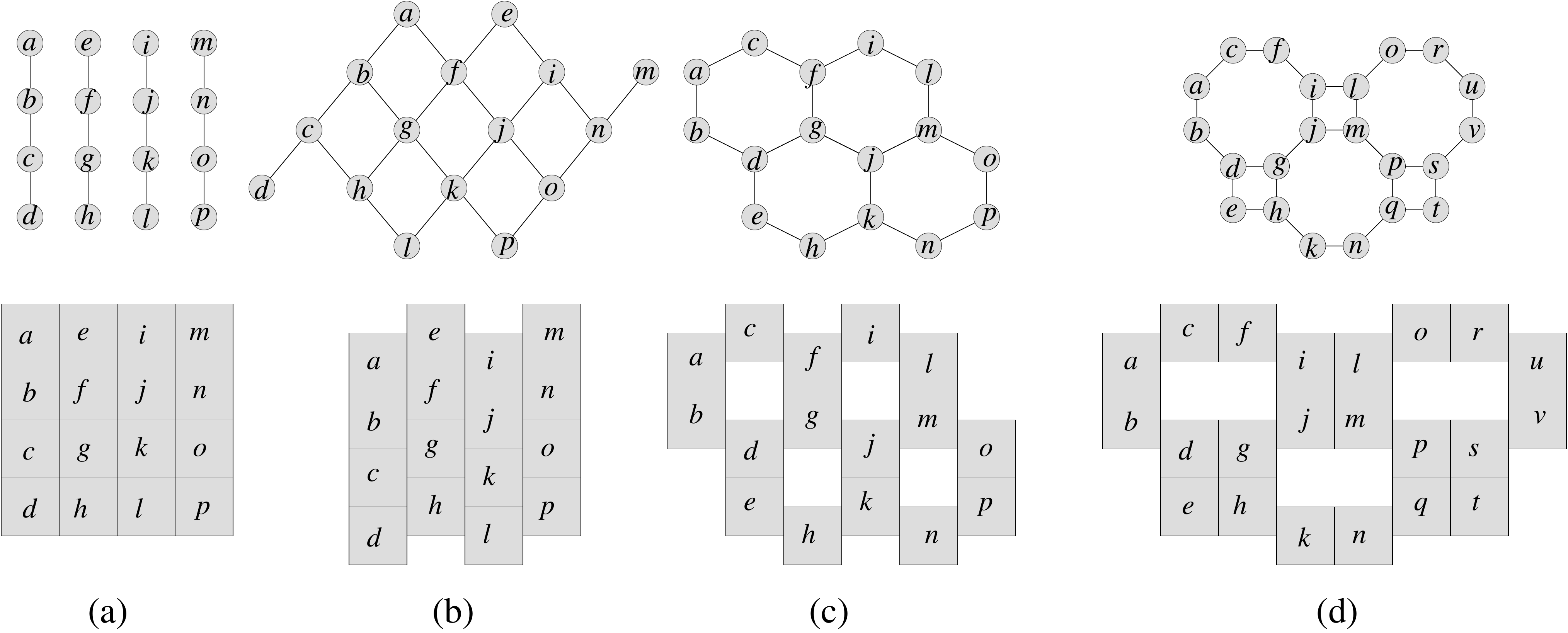}
\caption{Unit-cube contact representations for square, triangular, hexagonal and
 octagonal-square grids (top-view).}
\label{fig:grid-cube}
\end{figure}

Unfortunately, triangular grids are not always threshold-colorable, while for square grids,
 the status of the threshold-colorability is unknown. Thus we cannot use the result of
 Lemma~\ref{lem:threshold-cube} to find unit-cube contact representations for the subgraphs
 of square and triangular grids, although there are nice unit-cube contact representations
 for these grids with co-planar faces; see~\ref{fig:grid-cube}(a)--(b). Instead of using the threshold-coloring
 approach, we next show how to directly compute a unit-cube contact representation
via geometric algorithms.
 Specifically, we describe such geometric algorithms for unit-cube contact
 representations for any subgraph of hexagonal and square grids.

\begin{lemma}
\label{lem:unit-hexa}
 Any subgraph of a hexagonal grid has a unit-cube contact representation.
\end{lemma}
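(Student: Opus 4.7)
The plan is to combine the $(5,1)$-total-threshold-coloring of the hexagonal grid established in Lemma~\ref{lm:hex51} with the cube-lifting recipe of Lemma~\ref{lem:threshold-cube}. I would start from the unit-cube representation of the full hexagonal grid depicted in Fig.~\ref{fig:grid-cube}(c), in which by construction every cube has its bottom face on the common plane $z=0$, so that the hypothesis of Lemma~\ref{lem:threshold-cube} is satisfied.

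Given a spanning subgraph $H$, Lemma~\ref{lm:hex51} supplies a coloring $c\colon V\to\{-2,-1,0,1,2\}$ that realises the partition of edges between $H$ and the grid with threshold $t=1$. Following the proof of Lemma~\ref{lem:threshold-cube}, I would rescale the cubes uniformly to side length $1+\epsilon$ for some fixed $0<\epsilon<1$ and then translate the cube representing vertex $v$ upward by $c(v)$ units in the $z$-direction. Two cubes that were adjacent in the full grid then share a face of area $\bigl((1+\epsilon)-|c(u)-c(v)|\bigr)\cdot w$, which is positive precisely when $|c(u)-c(v)|\le 1$; by the defining property of $c$ this coincides exactly with the edge set of $H$. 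Pairs of vertices that were non-adjacent in the grid remain non-adjacent because their horizontal extents are untouched by the vertical lift. A final uniform rescaling by $1/(1+\epsilon)$ restores unit side length without altering the contact graph.

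The only substantive thing to verify is that Fig.~\ref{fig:grid-cube}(c) really provides a unit-cube representation of the full hexagonal grid with all bottom faces co-planar, which is evident from the brick-like pattern pictured there. No step is delicate, as the statement is essentially Corollary~\ref{cor:cube-thr} specialised to the hexagonal case. It is worth restating as a standalone lemma in order to motivate the geometric-algorithmic viewpoint needed for the more delicate square-grid construction that follows, where no total-threshold-coloring is available and a genuinely different argument is required.
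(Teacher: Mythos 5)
Your proof is correct, but it is not the proof the paper gives for this lemma. You derive the statement from Corollary~\ref{cor:cube-thr}, i.e.\ from the $(5,1)$-total-threshold-colorability of the hexagonal grid (Lemma~\ref{lm:hex51}) combined with the lifting construction of Lemma~\ref{lem:threshold-cube}. The paper explicitly acknowledges that this derivation works (and also that the result follows from Lemma~\ref{lem:unit-square}, since hexagonal grids are subgraphs of square grids), but then deliberately gives an \emph{alternative}, purely geometric proof: it builds the two-level representation of the full grid (gray cubes at $z=0$, white cubes at $z=1$), deletes the cubes of absent vertices, and then, using the bipartiteness of the grid, removes each unwanted contact by shifting a single cube a small distance $\epsilon$ in the appropriate direction. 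Your route is shorter and reuses existing machinery; the paper's route is self-contained, independent of the coloring results, and is intended as a warm-up for the square-grid case (Lemma~\ref{lem:unit-square}), where no total-threshold-coloring is available and only the direct geometric method applies --- a point you yourself anticipate in your closing remark. Two minor notes on your write-up: the lemma concerns arbitrary subgraphs, not only spanning ones, so you should add the (trivial) step of deleting the cubes of vertices not present in $H$; and your final rescaling by $1/(1+\epsilon)$ is unnecessary, since a uniform scaling of a unit-cube representation is again a unit-cube representation up to the choice of unit, exactly as in the proof of Lemma~\ref{lem:threshold-cube}.
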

\begin{proof}
This claim follows from Corollary~\ref{cor:cube-thr}. Furthermore since hexagonal grids
 are subgraphs of square grids, this result can also be proven as a corollary of
 Lemma~\ref{lem:unit-square}. However here we give an alternative proof by designing
 a geometric algorithm to construct unit-cube contact representation for subgraphs
 of hexagonal grids.

\begin{figure}[t]
\centering
\includegraphics[height=5cm]{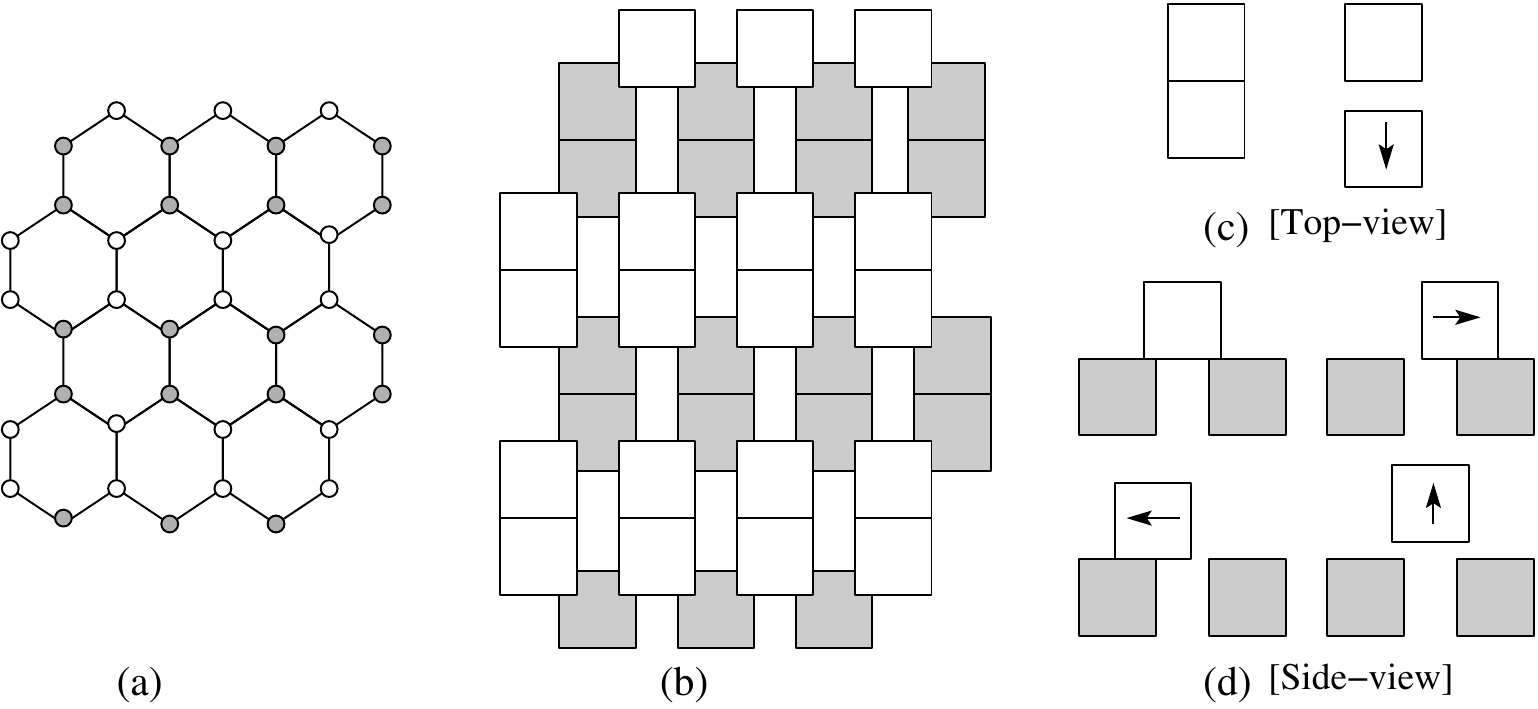}
\caption{Construction of unit-cube representation for any subgraph of a hexagonal grid.}
\label{fig:unit-hexa}
\end{figure}

Let $G$ be a subgraph of a hexagonal grid $G^*$, as in Fig.~\ref{fig:unit-hexa}(a).
 We first construct a unit-cube representation of $G^*$, where the base of each gray
 cube has $z$-coordinate 0 and the base of each white cube has $z$-coordinate 1;
 see Fig.~\ref{fig:unit-hexa}(b). Call this representation $\Gamma^*$. We now obtain
 a representation of $G$ from $\Gamma^*$ as follows. First, we delete the cubes
 corresponding to the vertices of $G^*$ that are not in $G$. Now to delete the edges
 in $G^*$ not in $G$, we note that $G^*$ is bipartite. Let $A$ be a partite set of $G$.
 Then we can delete a set of edges by only removing the contact from cubes corresponding
 to vertices in $A$. Suppose $v$ is a vertex in $A$ and $R$ is the corresponding cube.
 Without loss of generality, assume that $R$ is a white cube. Then $R$ has (at most) three
 contacts: one with a white cube $w$ and two other with two gray cubes $g_1$, $g_2$.
 To get rid of the contact with $w$, we just shift $R$ a small distance $0<\epsilon<0.5$
 away from $w$; see Fig.~\ref{fig:unit-hexa}(c). On the other hand, to get rid of the
 contact with exactly one of $g_1$ and $g_2$, we shift $R$ away from that cube until
 it looses the contact, while to get rid of both the adjacencies, we shift $R$ a small distance
 $0<\epsilon<0.5$ upward; see Fig.~\ref{fig:unit-hexa}(d).
\end{proof}

\begin{comment}
\begin{lemma}
\label{lem:unit-square}
 Any subgraph of a square grid or a hexagonal grid has a unit-cube contact representation.
\end{lemma}
\begin{proof} Since hexagonal grids are subgraphs of square grids, it is sufficient to
 construct unit-cube contact representations for only square grids. Furthermore, the result
 for hexagonal grids also follows from Corollary~\ref{cor:cube-thr}. In the following
 we thus only construct unit-cube representations for subgraphs of square grids. However
 the full version of this paper~\cite{our-aRxiv-Threshold} also gives a beautiful geometric
 algorithm to construct unit-cube contact representation for subgraphs of hexagonal grids.
\end{comment}

\begin{lemma}
\label{lem:unit-square}
 Any subgraph of a square grid has a unit-cube contact representation.
\end{lemma}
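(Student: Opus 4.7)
The approach I propose is a direct geometric construction, analogous to the one used for hexagonal grids in Lemma~\ref{lem:unit-hexa}. The plan is to start from the canonical unit-cube representation $\Gamma^*$ of the full square grid $G^*$ shown in Fig.~\ref{fig:grid-cube}(a), in which the cube for vertex $(i,j)$ occupies the axis-aligned box $[i,i+1] \times [j,j+1] \times [0,1]$. From $\Gamma^*$ I would first delete the cubes corresponding to vertices outside $H$ and then slightly perturb the positions of the remaining cubes in order to destroy exactly the face-contacts corresponding to edges of $G^* \setminus H$.

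The perturbation I plan to use consists of two small non-negative horizontal offsets per vertex: the cube for $(i,j)$ is relocated to $[i+s_x(i,j), i+1+s_x(i,j)] \times [j+s_y(i,j), j+1+s_y(i,j)] \times [0,1]$, with $s_x(i,j), s_y(i,j) \in [0,\tfrac{1}{2})$. The shift $s_x(i,j)$ is read off from $H$ restricted to row~$j$: partition the vertices of row~$j$ into maximal subpaths of consecutive $H$-edges (the \emph{row components}), assign the same value of $s_x$ to all vertices in a single row component, and increase $s_x$ by a small fixed $\delta>0$ each time one passes from one row component to the next, moving from left to right. The shift $s_y(i,j)$ is defined symmetrically via the \emph{column components} of $H$ along column~$i$.

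Under this rule, two horizontally grid-adjacent cubes lying in the same row component have identical $s_x$, so their cubes still share a face of positive area (reduced slightly by the differing $s_y$-offsets, but positive because $|s_y|$-differences are below $1$); whereas two horizontally grid-adjacent cubes in different row components have strictly different $s_x$, opening a gap of width $\delta$ that destroys the would-be contact. An analogous statement holds along columns, so the set of remaining face-contacts is precisely the edge set of~$H$.

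The main obstacle will be to rule out an unintended volume-overlap between two cubes that are not grid-adjacent, most notably a diagonal pair such as $(i,j)$ and $(i+1,j+1)$, which would overlap when $s_x$ and $s_y$ both ``decrease'' across the diagonal. I plan to handle this by enumerating the row and column components in a globally-consistent way, so that $s_x$ and $s_y$ are weakly monotone along both diagonal directions of the grid. In any residual case that cannot be fixed combinatorially, a small auxiliary vertical offset $s_z(i,j) \in [0,\tfrac{1}{2})$ can be added to move the problematic cube out of the $z$-range of its diagonal partner, while still preserving the $z$-overlap---and therefore the face-contact---for every edge of $H$. A routine case analysis then verifies that the final construction is a valid unit-cube contact representation of~$H$.
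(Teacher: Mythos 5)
Your construction handles the grid-adjacent pairs correctly, but the ``main obstacle'' you flag at the end --- volume overlap between diagonal pairs such as $(i,j)$ and $(i+1,j+1)$ --- is a genuine gap, and neither of your two proposed fixes closes it. First, the globally consistent monotone enumeration does not exist in general. Since $s_x(\cdot,j)$ must increase by $\delta$ at every missing horizontal edge of row $j$, avoiding $x$-overlap on both diagonals forces, for consecutive rows $f=s_x(\cdot,j)$ and $g=s_x(\cdot,j+1)$, the interleaving conditions $g(i+1)\ge f(i)$ and $f(i+1)\ge g(i)$ for all $i$. Take $H$ so that row $j$ has \emph{every} horizontal edge deleted while row $j+1$ keeps all of them: then $f$ climbs by $(n-1)\delta$ across the row while $g$ is constant, and the two conditions are jointly unsatisfiable once $n\ge 4$, no matter how the base offsets are chosen. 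Second, the fallback vertical offset $s_z(i,j)\in[0,\tfrac12)$ cannot rescue a bad diagonal pair: a $3$D overlap requires the interiors to intersect in all three coordinates, and two cubes whose $z$-ranges are $[0,1]$ and $[s_z,1+s_z]$ with $s_z<1$ still overlap in $z$; so if their $x$- and $y$-projections overlap (each by up to $\tfrac12$), they overlap in a box of positive volume. To separate them in $z$ you would need $s_z\ge 1$, which destroys the $z$-overlap needed for the contacts of $H$. So as written the construction can produce interpenetrating cubes, and the argument does not go through.

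The paper sidesteps this difficulty with a structurally different placement: it uses the bipartition $A\cup B$ of the grid, puts the $A$-cubes on a skewed lattice $x(i,j)=\delta i+(2-\delta)j$, $y(i,j)=(2-\delta)i-\delta j$ at two heights ($z=0$ and $z=0.5$), so that no two $A$-cubes ever touch or come close, and nests each $B$-cube at height $z=0.25$ among its four $A$-neighbours. Every potential contact then involves exactly one movable cube (the $B$-cube), and a short case analysis on $\deg_H(v_b)\in\{0,1,2,3,4\}$ shows that any subset of its four contacts can be broken by translating that single cube, with no risk of creating new adjacencies or overlaps. If you want to salvage your row/column-shift idea, you would need an additional mechanism that decouples diagonal neighbours --- for instance alternating $z$-levels by parity as in the paper, or in Lemma~\ref{lem:unit-hexa} --- rather than relying on monotonicity of the shifts.
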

\begin{proof}
 Let $G$ be a subgraph of a square grid $G^*$. We first construct
 a unit-cube representation of $G^*$. Note that $G^*$ is a bipartite graph and
 suppose $A$ and $B$ are its two partite sets; see Fig.~\ref{fig:unit-square}(a).
% illustrates such a square grid where the black vertices represents one partite
% set and the white and gray vertices represents the other partite set.
 We first place cubes for the vertices of the set $A$ (white and gray vertices in the figure).
 Consider the $(i,j)$ coordinate system on these vertices, as illustrated in
 Fig.~\ref{fig:unit-square}(a), with the center $(0,0)$ taken arbitrarily.
 Consider a vertex $v$ of $A$ with the coordinate $(i,j)$ in this coordinate system.
 We call $v$ a white vertex when $i+j$ is even; otherwise $v$ is a gray vertex.
 We place the cube for $v$ in the range
 $[x(i,j), x(i,j)+1]\times [y(i,j), y(i,j)+1]\times [z(i,j), z(i,j)+1]$;
 where $x(i,j)=[\delta i + (2-\delta) j]$,
 $y(i,j)=[(2-\delta) i -\delta j]$, $0<\delta<0.5$ and $z(i,j)=0.5$ if $v$ is a white vertex,
 otherwise $z(i,j)=0$.
 Each black vertex has two adjacent white and two adjacent gray vertices and we
 place its unit cube between the cubes for four adjacent vertices with $z$-coordinate 0.25;
 see Fig.~\ref{fig:unit-square}(b).

\begin{figure}[t]
\centering
\includegraphics[width=0.6\textwidth]{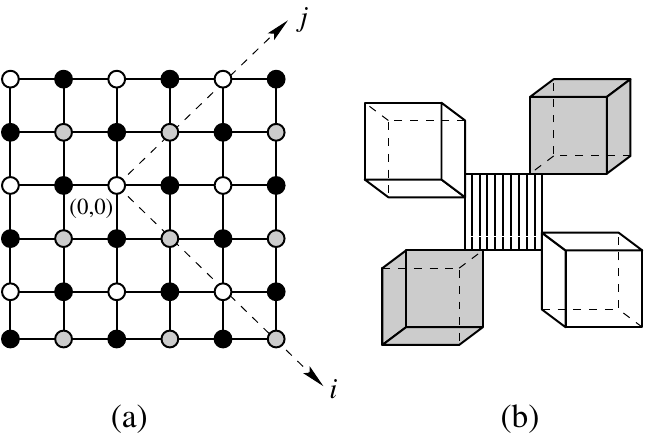}
\caption{Construction of unit-cube representation for any subgraph of a square grid.}
\label{fig:unit-square}
\end{figure}

We now modify this representation of $G^*$ to compute a representation for $G$.
 First we delete the cubes for the vertices of $G^*$ not in $G$. Then to remove contacts
 corresponding to the edges of $G^*$ not in $G$, we move cubes for the black vertices.
 Suppose $v_b$ is such a black vertex and its two adjacent white vertices are at coordinates
 $(i,j)$, $(i+1,j+1)$; while its two adjacent gray vertices are at coordinates $(i,j+1)$, $(i+1,j)$.
 Call the cube for the black vertex $R_b$ and let $R(x,y)$ denote the cubes for the white
 or gray vertex at coordinate $(x,y)$. If $v_b$ has degree $4$ in $G$, we don't have to move
 $R_b$. Again if $v_b$ has degree 0 in $G$, we move $R_b$ outside the boundary of
 $\Gamma$. Otherwise depending on the incident edges of $v_b$ present in $G$, we need to
 get rid of some of the contacts with $R_b$. We show how to do this in cases.

\noindent
\textbf{Case 1: $v_b$ has degree 3 in $G$.}
% In this case one of the four incident edges are missing in $G$.
 Assume that the incident edge of $v_b$ in $G^*$ missing in $G$ is with $v(i,j)$.
 Then we shift $R_b$ downward until its topmost plane has $z$-coordinate 0.5
 and then we shift $R_b$ in the $xy$-plane by a small amount $\epsilon$ away
 from $R(i,j)$, where $0<\epsilon<\delta$.

\noindent
\textbf{Case 2: $v_b$ has degree 2 in $G$.} If the two incident edges of $v_b$ missing in $G$
 are both with white (gray) vertices, then we shift $R_b$ downwards (upwards) until it looses
 contacts with both its white (gray) neighbors. Otherwise, assume that the two incident edges
 of $v_b$ missing in $G$ are with $v(i,j)$ and $v(i,j+1)$. In this case, we shift $R_b$ downward
 until its topmost plane has $z$-coordinate 0.5 and then we shift $R_b$ in the $xy$-plane
 away from both $R(i,j)$ and $R(i,j+1)$ until it looses contact with both of them.

\noindent
\textbf{Case 3: $v_b$ has degree 1 in $G$.} Suppose the incident edges of $v_b$ missing
 from $G$ are with $v(i,j)$, $v(i,j+1)$ and $v(i+1,j)$. We then first shift $R_b$ upward
 until its bottommost plane has $z$-coordinate is in the open interval $(1, 1.5)$ so that
 it looses contacts with both $R(i,j+1)$ and $R(i+1,j)$. Finally we move $R_b$ away from
 $R(i,j)$ until it looses the contact with $R(i,j)$.

%\noindent
%\textbf{Case 4: $v_b$ has degree 0 in $G$.} In this case, we place $R_b$ in the space
% somewhere without any contact with any other cube. For example, we may push it upward
% until its bottommost plane is at $z$-coordinate 2.
\end{proof}

To summarize the results in this section:

\begin{theorem}
\label{th:unit-cube} Any subgraph of the square, hexagonal and octagonal-square grid
 has a unit-cube contact representation.
\end{theorem}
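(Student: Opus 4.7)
The plan is to assemble the theorem as a short consolidation of the three lemmas already proved in this section, organized into two cases according to which geometric strategy is available.

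For the hexagonal grid and the octagonal-square grid I would simply cite Corollary~\ref{cor:cube-thr}. That corollary itself rests on two ingredients. The first is Lemma~\ref{lem:threshold-cube}: given a unit-cube representation of a grid $G^*$ in which all bottom faces are co-planar, any threshold subgraph of $G^*$ has a unit-cube representation, obtained by lifting each cube by an amount equal to its color once the common side length is chosen in the open interval $(t,t+1)$. The second is the $(5,1)$-total-threshold-colorability of the two grids established in Lemmas~\ref{lm:hex51} and~\ref{lm:84grid51}, which upgrades ``threshold subgraph'' to ``arbitrary subgraph''. The required co-planar unit-cube representations of the full grids are exhibited in Fig.~\ref{fig:grid-cube}(c)--(d), so every hypothesis of Lemma~\ref{lem:threshold-cube} is in hand.

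For the square grid the lifting trick is unavailable, because the total-threshold-colorability of the square grid is left open in this paper. Instead I would invoke Lemma~\ref{lem:unit-square} directly: its proof gives an explicit geometric construction, placing one bipartition class at staggered heights $0$ and $0.5$ with a small lateral shear $\delta$, placing the other class at height $0.25$ sandwiched between four neighbors, and then eliminating unwanted contacts by case analysis on the number of missing incident edges at each ``black'' cube.

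Nothing further is needed to conclude the theorem; the main obstacles all live inside the lemmas being consolidated, namely the total-threshold-colorability arguments for the hexagonal and octagonal-square grids and the ad hoc geometric construction for the square grid. I would additionally remark that Lemma~\ref{lem:unit-hexa} supplies a direct, threshold-free construction for the hexagonal grid, which is redundant for the present theorem but is worth mentioning as a cross-check. In short, the proof reduces to three one-sentence appeals, one per grid family.
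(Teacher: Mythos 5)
Your proposal is correct and matches the paper exactly: the theorem is stated there as a summary of the section, with the hexagonal and octagonal-square cases following from Corollary~\ref{cor:cube-thr} (lifting via Lemma~\ref{lem:threshold-cube} combined with Lemmas~\ref{lm:hex51} and~\ref{lm:84grid51}) and the square case from the direct geometric construction of Lemma~\ref{lem:unit-square}. Your observation that the lifting trick is unavailable for the square grid because its total-threshold-colorability is open is precisely the paper's stated motivation for the separate construction.
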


\section{Conclusion and Open Problems}

We introduced a new graph coloring problem, called threshold-coloring, that generates
spanning subgraphs from an input graph where the edges of the subgraph are implied
by small absolute value difference between the colors of the endpoints.
We showed that any spanning subgraph of trees, some planar grids, and planar graphs
without cycles of length $\le 9$
can be generated in this way; for other classes like triangular and square-triangle
grids, we showed that this is not possible.
We also considered different variants of the problem and noted relations with other well-known graph coloring and
graph-theoretic problems.
Finally we use the threshold-coloring problem to find unit-cube contact representation for all the subgraphs
of some planar grids. The following is a list of some interesting open problems and future work.

\begin{enumerate}
\item Some classes of graphs are total-threshold-colorable, while others are not. There are many classes for which the problem
remains open; see Table~\ref{tbl:results} for some examples. A particularly interesting
 class is the square grid: does any subgraph of a square grid have a threshold-coloring?

\item Theorem~\ref{thm:no9cyc} implies that any planar graph without cycles of length $\le 9$
 is total-threshold-colorable. On the other hand, all our examples of non-threshold-colorable
 in Fig.~\ref{fig:trian} contain triangles. Can we reduce this
 gap by identifying the minimum cycle-length (girth) in a planar graph that guarantees total-threshold-colorability?

\item Can we efficiently recognize graphs that are threshold-colorable?

\item Is there a good characterization of threshold-colorable graphs?

\item The triangular and square-triangular grid are not total-threshold-colorable and we cannot use threshold-colorability
to find unit-cube contact representations; can we give a geometric algorithm (such as those in Section~\ref{sect:unit}
for square and hexagonal grids) to directly compute such representations?

\end{enumerate}
\bigskip

\noindent{\bf\large Acknowledgments:}
We thank Torsten Ueckerdt, Carola Winzen and Michael Bekos for discussions about different variants of
the threshold-coloring problem.

\bibliographystyle{abbrv}
%\bibliography{paper}

\end{document}